\newcommand{\midd}{\mathbin{:}}
\begin{document}




\title{Cutoff stability under distributional constraints with an application to summer internship matching\footnote{This paper is the extension of our conference paper~\citep{aziz2020summer}. The main addition is Section \ref{sec:cutoff} on our new solution concept of cutoff stability and an algorithm to achieve it.}}

\author{
	Haris Aziz \and
	Anton Baychkov \and
	P\'eter Bir\'o
}

\titlerunning{}
\authorrunning{Haris Aziz, Anton Baychkov, and P\'eter Bir\'o}

\institute{Haris Aziz \at
              UNSW Sydney, Australia \\
              \email{haziz@cse.unsw.edu.au}
              \and
			  Anton Baychkov \at
              University of Sydney, Australia \\
              \email{abay3963@uni.sydney.edu.au}
        \and
	  P\'eter Bir\'o \at
        Centre for Economic and Regional Studies, Hungary \\
        \email{peter.biro@krtk.mta.hu}	
}
\date{\vspace{-2em}Received: date / Accepted: date}

\maketitle


%
%
%

\begin{abstract}
We introduce a new two-sided stable matching problem that describes the summer internship matching practice of an Australian university. The model is a case between two models of Kamada and Kojima on matchings with distributional constraints. We study three solution concepts, the strong and weak stability concepts proposed by Kamada and Kojima, and a new one in between the two, called cutoff stability. Kamada and Kojima showed that a strongly stable matching may not exist in their most restricted model with disjoint regional quotas. Our first result is that checking its existence is NP-hard.
We then show that a cutoff stable matching exists not just for the summer internship problem but also for the general matching model with arbitrary heredity constraints. We present an algorithm to compute a cutoff stable matching and show that it runs in polynomial time in our special case of summer internship model.
However, we also show that finding a maximum size cutoff stable matching is NP-hard, but we provide a Mixed Integer Linear Program formulation for this optimisation problem.
\end{abstract}

\keywords{stable matching, distributional constraints, cutoff scores, NP-hardness, integer programming}
\vspace{-1em}
\section{Introduction}

\sloppy

Centralized two-sided matching market algorithms have received immense success in several application domains, including matching students to schools, residents to hospitals, and projects to workers.\footnote{For an overview of real-life matching markets, please see \url{http://www.matching-in-practice.eu}, and a recent survey~\citep{biro2017applications}.} We present a novel matching market model that we refer to as the summer internship matching market. The model captures the matching of student applicants to projects proposed by supervisors in an internship program.
A distinctive feature of the model is that in order for an applicant to be assigned to any project, a certain amount of money needs to be contributed from the project supervisors' funds.


Our problem is inspired by summer intern research programs in Australia.
It is common for undergraduate students to undertake research projects over the summer. Each project is supervised by one or more members of the faculty, with many offering multiple projects. Even though the projects may be discounted by contributions from the faculty, supervisors are often required to contribute to the funding of these positions, from their personal research budget. Alternatively, they could be constrained by the amount of time they can allocate to supervision. These supervisor-side constraints mean that not all projects can be funded.

Just as the standard hospital resident matching models do not just apply to the matching of doctors to hospitals~\citep{Roth08a}, our model also does not just apply to matching of student interns. It applies to any two-sided matching model in which very widely applicable budget requirements and budget constraints are involved. For example, our problem also models hiring scenario in which different teams have their own budgets and they want to hire employees. Certain employee roles could sit across various teams. In that case, multiple teams can pool in their money to fund joint positions.

The budget constraints that we consider lead to interesting research challenges.
Firstly, applying standard matching algorithms such as the Deferred Acceptance Algorithm does not work as it do not deal with complex feasibility constraints. More critically, as we will discuss, there may exist no feasible matching that satisfies stability as considered in seminal papers on matching (see, e.g., \citet{GaSh62a} and \citet{AbSo03b}).



\vspace{-1em}
\subsection{Contributions}

In this paper, we formalize the summer internship problem with budgets, abbreviated as SIP. It falls within the class of models that matches applicants, projects, and supervisors. The main characteristic of our problem is that supervisors have budgets that they can spread across their projects.
Our model is more general than the widely-used hospital-resident matching model in which the hospitals are partitioned into regions and regions have upper capacities~\citep{KaKo15}, a model that we abbreviate as REG. On the other hand, our model is a special case of the matching model under distributional constraints of \citet{KaKo17b}, where the feasibility of a matching is monotone in the number of applicants matched, the property called \emph{heredity} by~\citet{GKK+17}, that we abbreviate as HER.

First, we study the concept of strong stability proposed in~\citet{KaKo15}, where the authors showed that such solution may not exist for REG. Here we prove that the problem of checking the existence of a strongly stable matching is NP-hard for REG (and thus also for any settings that contain REG as a special case, such as SIP and HER).

Then we study weak stability, also introduced in~\citet{KaKo15} for REG and then studied in~\citet{KaKo17b} for HER. In the conference version of our paper \citep{aziz2020summer} we provided a strongly polynomial algorithm for computing a weakly stable matching for our intermediate SIP model, based on the algorithm of~\citet{KaKo17b} for HER. Our algorithm uses as an oracle an algorithm based on network flows to repeatedly check whether a given matching is feasible or not.
In this extended version of our conference paper, we strengthen this result by presenting a general algorithm that returns a matching satisfying cutoff stability.
Fair matchings are exactly those matchings that can be induced by a set of cutoff scores. Meanwhile, cutoff stable matchings are those that are induced by minimal cutoffs, i.e., where the decrease of any cutoff would make the induced matching infeasible. We show that cutoff stability is an intermediate notion between weak and strong stability, and the computation of a cutoff stable matching is always possible with our algorithm for HER. Since HER covers many well-studied settings including refugee matching (see, e.g., \citep{ACGS18a,DKT17a}), our result have wide-applicability.\footnote{See, for example, the discussion by \citet{KaKo20a} who point out that even intra-project heredity constraints capture problems including college admissions with students with disabilities; refugee match, and daycare allocation.} We apply the algorithm in the context of SIP and show that it can be implemented to run in polynomial time. However, we also show that finding a maximum size cutoff stable matching is NP-hard even for REG (implying that is also NP-hard for SIP and HER). However, we formulate a Mixed Integer Linear Program (MILP) for finding such an optimal solution for SIP. Many of our general structural results for the HER model. Unless specified otherwise, we will assume that our formal statements apply to HER.

Finally, we provide a normative criterion for egalitarian ex-post allocation of supervisor funding among projects, and a polynomial-time algorithm to find an egalitarian allocation. Combined with our polynomial-time algorithm for finding a weakly stable matching, we present a compelling approach for finding a desirable solution for the summer internship problem that appeals to both stability and fairness requirements.
\vspace{-0.5em}
\subsection{Related Work}

The literature on two-sided matching was inspired by the seminal paper of \citet{GaSh62} who considered matching markets that match students to schools and hospitals to residents.
The paper has spawned richer matching models and resulted in new algorithmic work~(see, e.g., \citet{Manl13a}). Our work is an extension of these models and falls in the general umbrella of matching markets with various kinds of distributions constraints (see, e.g., \citet{ABY22a,ACGS18a,KaKo17a,FIT+16a,FrTr17,KHIY17a}).

Our concept of budget-feasibility is a type of feasibility constraint, as defined by \citet{KaKo17b} and thus our problem falls under the umbrella of matching under feasibility constraints. Thus, the notions of strong and weak stability studied by \citet{KaKo17b} also apply in our model.
In our paper, we focus on computational results such as establishing NP-completeness or polynomial-time solvability of stable matchings. Our model also has the additional dimension of budget allocations for which we explore fairness concepts as well as algorithms to divide the budget in an egalitarian manner.
We also propose an intermediate concept called cutoff stability and prove that it can be achieved for the general matching model with arbitrary heredity constraints. Cutoff stability is a new notion for the HER model, though similar notions have been studied for different stable matching models, such as the choice function model of ~\citet{FleinerJanko2014}. Our concept of cutoff stability defined for general feasibility constraints is similar in spirit to  the within-type envy-freeness concept of \citet{EcYe15a} in the context of school choice with affirmative action.

\citet{GKK+17} also considered two-sided matching under general feasibility constraints that satisfy the heredity property. Their main contribution is proposing an algorithm called Adaptive Deferred Acceptance that satisfies strategy-proofness, non-wastefulness, and a fairness property (that is weaker than the weak stability concept of \citet{KaKo17b} and hence also cutoff stability). \citet{KaKo20a} also consider heredity constraints that apply to individual schools/hospitals.


Our model bears some similarities with the hospital-resident matching problem with regional constraints~\citep{KaKo15,BFIM10,GIKY+16a,AGSW19a,KaKo18a}. In these region-based problems, at most a certain number of students can be selected from given regions. On the other hand, in the summer internship problem, a supervisor's budget is divisible and can be spread partially over all of her projects.
If the regions are disjoint (as studied by \citet{KaKo15}), then the region-based model is a special case of our model.
The general setting with region constraints has not seen many positive results, and the more constrained hierarchical regions as studied by \citet{GIKY+16a} and \citet{KaKo18a} can neither replicate, nor be replicated by a set of supervisors. In particular, the concept of stability with regional priorities proposed by \citet{KaKo18a} cannot be directly applied to our setting. Our concept of cutoff stability is an alternative intermediate notion of stability and it can be applied to a wider range of settings (those subsumed under the HER model).


\citet{AIM07} considered a different model for student-project allocation. Notable differences in their model include: (1) no project has multiple supervisors, (2) each supervisor has a universal priority list over students which is not project specific, and (3) a supervisor has a rigid capacity constraint for the number of projects to supervise. Our model allows supervisors to explore more efficient outcomes by pooling in their budgets to host a student. The universal priority list of each supervisor makes the model of \citet{AIM07} much more restricted and different from our model.
%

Two other recent models are similar to our setting. \citet{GKK+17} introduce the Student-Project-Room matching problem. Rooms are indivisible, and at most one room can be allocated to each project. \citet{IYY18} extend this to a more general Student-Project-Resource allocation problem. Resources are still indivisible, but there is no longer a restriction imposed on the number of resources that can be allocated to each project. Our model is distinct from both of these, as it allows the resources (in our case, supervisor budgets) to be divisible. This divisibility allows for better computational results. For instance, verifying the feasibility of a matching, and finding a weakly stable (and thus non-wasteful) matching can be done in polynomial time.

There is also work on matching with budget constraints~(see, e.g., \citet{KaIw17a,KaIw18a,IHZ+19a}). The models considered in these papers are different in several respects. For example, hospitals have additive utilities and each hospital gives monetary compensation to doctors.

Regarding the LP descriptions and IP techniques for two-sided stable matching problems, \cite{baiou2000stable} gave the first description on the stable admissions polytope. Integer programming techniques have been used later for college admissions with special features \citep{ABMcB2016}, stable project allocation under distributional constraints \citep{ABSz2018}, the hospital--resident problem with couples \citep{BMMcB2014}, and ties \citep{KM2014,Delormeetal2019}.

\vspace{-1em}
\section{Preliminaries}

First, we introduce our summer internship matching model (SIP), we show that checking the feasibility of a matching is polynomial time tractable, and finally we also show the relation of our model to the REG and HER models of Kamada and Kojima described in \citet{KaKo15} and \citet{KaKo17b}, respectively.
\vspace{-0.5em}
\subsection{Model}

Let $A$ be a finite set of applicants, and $P$ a finite set of projects. Each applicant $a\in A$ has a strict preference list $\succ_a$ that ranks the projects that $a$ finds acceptable. Each project $p\in P$ has a preference list $\succ_p$ over the subset of applicants that $p$ finds acceptable, and a maximum capacity $c_p$.

Furthermore, let $S$ denote the set of project supervisors. Each supervisor $s\in S$ has a list of projects $P_s$ that they supervise, and is endowed with a budget (e.g. quantity of funds) $q_s$ that they can allocate among those projects. We assume that these budgets are infinitely divisible and that each applicant requires one unit of funding. Further, we assume that these endowments are publicly known, and thus supervisors cannot strategise by misreporting their budgets.
Additionally, denote the list of supervisors for project $p$ by $S_p$.

We say that an applicant $a\in A$ is matched to project $p\in P$ if $(a, p)\in M$. A \textbf{matching} $M$ is a subset of $A\times P$ that satisfies the following conditions:

\begin{itemize}
	\item Each applicant is matched to at most one project (for all $a\in A$, $|\{(a, p)\in M\ \midd \ p\in P\}|\leq 1$), and $a$ finds the project they are matched to acceptable.
	\item The number of applicants matched to any project does not exceed that project's capacity (for all $p\in P$, $|\{(a, p)\in M\ \midd \ a\in A\}|\leq c_p$), and $p$ finds all applicants matched to it acceptable.
\end{itemize}

We use $M(a)$ to refer to the project that applicant $a$ is matched to ($M(a)=\emptyset$ if $a$ is unmatched). Meanwhile, $M(p)$ denotes the set of applicants matched to $p$.

Let $x_{s,p}$ be the amount of funds a supervisor $s$ allocates to project $p$. We call a matching $M$ \textbf{feasible} (or supervisor-feasible) if there exists a set $\{x_{s,p}\}_{s\in S, p \in P_s}$ that satisfies the following conditions:

\begin{itemize}
	\item $x_{s,p}\geq 0$ for all $s\in S,~p\in P_s$
	\item Every project receives one unit of funding for each applicant matched to it: $\sum_{s\in S_p} x_{s,p} = |M(p)| \text{ for all } p\in P$
		\item Supervisors do not exceed their endowment: $\sum_{p\in P_s} x_{s,p} \leq q_{s} \text{ for all } s\in S$
\end{itemize}


We call any set $\{x_{s,p}\}_{s\in S, p \in P_s}$ that satisfies the above conditions for matching M a \textbf{feasible funding allocation}.
The mathematical model presented exactly captures the student research internship program in our university: each supervisor can be part of multiple internship project proposals but does not necessarily have the funding to contribute to all of them.

\begin{example}[Summer Internship Problem]
	Consider the following instance of the summer internship problem with 2 applicants 2 supervisors, and 2 projects.
	\begin{align*}
		A=\{a_1,a_2\}&&
		\succ_{a_1}:p_2, p_1&&
		\succ_{a_2}:p_1, p_2\\
		P=\{p_1,p_2\}&&
		\succ_{p_1}:a_1, a_2&&
		\succ_{p_2}:a_2, a_1\\
		S=\{s_1, s_2\}&&
		P_{s_1}=\{p_1, p_2\}&&P_{s_2}=\{p_2\}\\
		q_{s_1}=0.7&&
		q_{s_2}=0.5&&
		c_{p_1}=c_{p_2}=1
	\end{align*}
	
	The only three feasible matchings are the empty matching and the two matchings in which some applicant is matched to project $p_2$. The reason no one can be matched to project $p_1$ is that $p_1$ has a sole supervisor $s_1$ who does not have sufficient funding to fund $p_1$. On the other hand, the combined funding of $s_1$ and $s_2$ is more than 1 for project $p_2$ so $p_2$ can be funded. \textcolor{black}{One possible} feasible funding allocation $x$ is where $s_1$ contributed half of the budget of project $p_2$ and $s_2$ contributes the rest: $x_{s_1,p_2}=0.5$, $x_{s_2,p_2}=0.5$.
	\end{example}
\vspace{-0.5em}
\subsection{Checking Feasibility of a Matching}

We show that checking feasibility of a matching in our model can be done efficiently by reducing the question to a network flow problem.\footnote{For an overview of network flows, see \citet{AMO93a}.} Define the \textbf{funding flow graph} $G_M$ associated with a matching $M$ as follows:

\begin{itemize}
	\item $V(G_M) =\{s^*\}\cup S\cup P\cup \{t^*\}$, where $s^*$ is the source, and $t^*$ is the sink
	\item Arcs $(s,p)$, for all supervisor-project pairs where $p\in P_s$ with capacity $\infty$
	\item Arcs $(s^*,s)$, for all $s\in S$, each with capacity $q_s$
	\item Arcs $(p,t^*)$, for all $p\in P$, each with capacity $|M(p)|$
\end{itemize}

\begin{theorem}\label{thm:feasibilitycheck}
	The feasibility of a matching can be checked in polynomial time $O(\max\{|S|,|P|\}^3)$ for the summer-internship problem.
\end{theorem}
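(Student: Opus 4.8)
The plan is to reduce the feasibility test to a single maximum-flow computation on the funding flow graph $G_M$ constructed just above. First I would prove the equivalence: $M$ is feasible if and only if the maximum $s^*$-$t^*$ flow in $G_M$ has value $|M| := \sum_{p\in P}|M(p)|$. Observe that the total capacity of the arcs entering $t^*$ is exactly $|M|$, so every flow has value at most $|M|$, and a flow of value $|M|$ must saturate every arc $(p,t^*)$. For the forward direction, from a feasible funding allocation $\{x_{s,p}\}$ I would build the flow that sends $x_{s,p}$ units on each arc $(s,p)$, $\sum_{p\in P_s}x_{s,p}$ units on $(s^*,s)$, and $|M(p)|$ units on $(p,t^*)$: nonnegativity of the $x_{s,p}$ gives a valid flow on the arcs $(s,p)$, the endowment constraint $\sum_{p\in P_s}x_{s,p}\le q_s$ respects the capacity of $(s^*,s)$, the funding constraint $\sum_{s\in S_p}x_{s,p}=|M(p)|$ gives conservation at each $p$, and conservation at each $s$ holds by construction; this flow has value $|M|$, hence so does the maximum flow. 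For the converse, given a maximum flow of value $|M|$, reading $x_{s,p}$ off as the flow on arc $(s,p)$ recovers a feasible funding allocation: the three defining conditions correspond, respectively, to nonnegativity of flow, conservation at each $p$ (using that $(p,t^*)$ is saturated), and the capacity of $(s^*,s)$ together with conservation at each $s$.

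Then I would bound the running time. The graph $G_M$ has $|S|+|P|+2$ vertices and $O(|S||P|)$ arcs, so with $n:=\max\{|S|,|P|\}$ we have $|V(G_M)|=O(n)$. Running a maximum-flow algorithm whose complexity depends only on the graph size and not on the (possibly non-integral) capacities --- for instance the FIFO push--relabel algorithm, which runs in $O(|V|^3)$ time --- yields the bound $O(n^3)=O((\max\{|S|,|P|\})^3)$; comparing the value of the returned flow to $|M|$ then decides feasibility.

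The complexity part is routine; the step that requires care is the equivalence, specifically the observation that maximality of the flow forces all sink arcs to be saturated, which is precisely what lets one translate a maximum flow back into a genuine funding allocation delivering exactly $|M(p)|$ units to each project (this also transparently handles degenerate cases, e.g.\ a project with no supervisors can never have its sink arc saturated). It is worth remarking that divisibility of the budgets is essential: we never require the flow to be integral, so arbitrary real endowments $q_s$ pose no difficulty, and the $O(|V|^3)$ push--relabel bound counts only elementary operations, independently of capacity magnitudes.
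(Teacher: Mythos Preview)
Your proposal is correct and follows essentially the same route as the paper: reduce feasibility to a maximum-flow computation on the funding flow graph $G_M$, establish the equivalence by translating a feasible funding allocation into a flow of value $|M|$ and conversely reading $x_{s,p}$ off a flow that saturates the sink arcs, and then invoke an $O(|V|^3)$ max-flow algorithm on a graph with $|S|+|P|+2$ vertices. The only cosmetic differences are that the paper cites the Malhotra--Kumar--Maheshwari algorithm rather than push--relabel, and you make the saturation-of-sink-arcs observation explicit whereas the paper leaves it implicit.
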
	
	
\begin{proof}
	Our first claim is that a matching $M$ is feasible if and only if $G_M$ admits a feasible $s^*$-$t^*$ flow of size $|M|$. \textcolor{black}{Note that by construction the value of an $s^*$-$t^*$ flow is at most $|M|$.}
	
	Suppose $M$ is feasible. Define a flow $f$ on $G_M$ as follows:
	\begin{itemize}
		\item $f(s,p)=x_{s,p}$, $\forall s\in S, p\in P_s$
		\item $f(s^*,s)=\sum_{p\in P_s} x_{s,p}$, $\forall s\in S$
		\item $f(p,t^*)=\sum_{s\in S_p} x_{s,p}$, $\forall p\in P$
	\end{itemize}
It is easy to see that this is a feasible flow of size $|M|$.
Now suppose that $G_M$ admits a feasible flow $f$ of size $|M|$.
	Set $x_{s,p}=f(s,p)$, $\forall s\in S, p\in P_s$. We can then show that this $\{x_{s,p}\}$ satisfies the conditions of feasibility.

Now that we have established 	the claim, we use the fact that
the maximum flow problem can be solved in $O(|V|^3)$ time (where $V$ is the set of vertices), using for instance, the algorithm proposed by \citet{MKM78}. $V(G_M)=|S|+|P|+2$ and, given a matching $M$, $G_M$ can be constructed in $O((|S|+|P|)^2)$ time. We can check whether $M$ is feasible in $O(\max\{|S|,|P|\}^3)$ time by computing a maximum flow and verifying whether it equals $|M|$.
\qed\end{proof}

Note that by the integer property of the network flow problem, if all the capacities of the supervisors are integer and the flow is feasible then an integer funding allocation exists.
\vspace{-0.5em}
\subsection{Connection with the models by Kamada and Kojima}
	
Our model satisfies the heredity property of \citet{KaKo17b} and \citet{GKK+17}, that says that if a matching is feasible then it remains feasible if the numbers of applicants matched to each project \textcolor{black}{decreases or remains the same}. \textcolor{black}{Let $\mathcal{M}$ denote all the subsets of $A\times P$, that we call as generalized matchings.} Their matching model under distributional constraints that satisfy heredity (HER), can be represented by a \textbf{feasibility function} $f:\mathcal{M}\rightarrow \{0,1\}$ that satisfies the following condition:

\begin{itemize}
\item $f(\emptyset)=1$ and, for any two matchings $M$ and $M'$, if $|M'(p)|\leq |M(p)| \ \forall p\in P$ then $f(M)=1 \Rightarrow f(M')=1$.
\end{itemize}

A \textcolor{black}{generalized} matching $M$ is feasible if and only if $f(M)=1$. Note that the feasibility constraints are `anonymous' in the sense that they do not depend on the identity of the applicants matched but only on their quantity. \textcolor{black}{When considering the feasibility of a generalized matching in the context of SIP we relax the condition that every applicant can be matched to at most one project, but we obey the further requirements, namely the project-capacity and budget conditions. These conditions can be still checked efficiently by the very same reduction to the maximum flow problem.}

\citet{KaKo15} studied a basic model with regional upper quotas, where the regions form a partition of the set of hospitals (REG). We can represent this in our model by replacing each region with a supervisor, endowed with a budget equal to that region's capacity, and supervising a set of projects that correspond to the hospitals in that region.


Thus, REG is a special case of SIP, and SIP is a special case of HER. Therefore, every hardness result for REG implies the same hardness result for SIP and HER, and any easiness result for HER implies the same result for SIP, and an easiness result for SIP also holds for REG.
\subsection{Fair matchings and cutoff scores}

We say that matching $M$ is \emph{fair}, if for every pair $(a, p)\notin M$, $p\succ_aM(a)$ implies that $a' \succ_p a$ for every $a'\in M(p)$.\footnote{In the school choice literature this property is also called as \emph{justified envy-freeness}, see, e.g., \citet{abdulkadirouglu2003school}.} This is a basic property for both strong and weak stability by ~\citet{KaKo17b}, and so also for our intermediate notion of cutoff stability.

In the classical Gale-Shapley model \citep{GaSh62} a matching is stable if and only if it is fair and non-wasteful, where \emph{non-wastefulness} means that there exists no project $p$ with unfilled capacity and applicant $a$ where $p\succ_aM(a)$. In our context we can define different non-wastefulness notions leading to weak and strong stability and a new intermediate stability concept, called cutoff stability. We will define and analyze these stability concepts in Sections \ref{sec:strong}, \ref{sec:weak} and \ref{sec:cutoff}, respectively.

The concept of cutoff scores is closely related to fair matchings. Let $d:P\rightarrow [0,1,\dots,|A|+1]$ be the cutoff score function, where $d(p)$ is the \emph{cutoff} at project $p$. Without loss of generality we assume that each project $p$ assigns a score to each applicant $a$ in accordance with its preference list, that is $a$ has score $|A|-k+1$ if she is ranked $k$th by project $p$. Given cutoff scores $d$, we say that applicant $a$ is \emph{admissible} to project $p$ if her score achieves the cutoff. Cutoff scores $d$ \emph{induce} matching $M$, if every applicant is matched to the best project of her preference where she is admissible.\footnote{In fact, in many college admission schemes only the cutoff scores are announced, the induced matchings are obvious for the participants involved, see, e.g. the cases of Hungary~\citep{ABMcB2016} and Australia~\citep{artemov2017strategic,guillen2020field}.} The following observation is well-known (see, e.g. Lemma 3 in \citet{FleinerJanko2014}), for completeness we give a short proof.

\begin{proposition}\label{prop:fair}
A matching is fair if and only if it is induced by some cutoff scores.
\end{proposition}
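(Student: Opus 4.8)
The plan is to prove the two directions of the equivalence separately, since each amounts to a routine but explicit construction.

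\textbf{($\Leftarrow$) Cutoff scores induce a fair matching.} First I would take a matching $M$ induced by some cutoff score function $d$ and verify fairness directly from the definition. Suppose $(a,p)\notin M$ and $p \succ_a M(a)$. Since $M$ is induced by $d$, applicant $a$ is matched to the best project in her list where she is admissible; because she prefers $p$ to $M(a)$ (or is unmatched while finding $p$ acceptable), she must \emph{not} be admissible at $p$, i.e. her score at $p$ is strictly below $d(p)$. Now take any $a' \in M(p)$. Since $a'$ is matched to $p$ and $M$ is induced by $d$, $a'$ is admissible at $p$, so $a'$'s score at $p$ is at least $d(p)$, hence strictly greater than $a$'s score at $p$. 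Since scores at $p$ respect $\succ_p$, this gives $a' \succ_p a$. That is exactly the fairness condition.

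\textbf{($\Rightarrow$) A fair matching is induced by some cutoff scores.} For the converse I would start from a fair matching $M$ and exhibit cutoff scores. The natural choice is: for each project $p$ that is assigned at least one applicant in $M$, set $d(p)$ to be the score of the \emph{worst} applicant in $M(p)$ (equivalently $d(p) = \min_{a \in M(p)} (\text{score of } a \text{ at } p)$); and for each project $p$ with $M(p) = \emptyset$, set $d(p) = |A|+1$, so that nobody clears the cutoff. I would then check that these cutoffs induce $M$, i.e. that every applicant is matched to the best project on her list where she is admissible. Fix $a$. First, $a$ is admissible at $M(a)$ by construction (her score meets the minimum over $M(a)$). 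Second, I must show there is no project $p$ with $p \succ_a M(a)$ at which $a$ is admissible. Suppose for contradiction such a $p$ exists; then $a$'s score at $p$ is at least $d(p)$. If $M(p) = \emptyset$ this is impossible since $d(p) = |A|+1$ exceeds every real score. If $M(p) \neq \emptyset$, then by the choice of $d(p)$ there is some $a' \in M(p)$ whose score at $p$ equals $d(p) \le$ (score of $a$ at $p$), so $a \succsim_p a'$; since preferences are strict and $a \neq a'$ (as $(a,p) \notin M$), in fact $a \succ_p a'$. But $(a,p)\notin M$, $p \succ_a M(a)$, and $a \succ_p a'$ for some $a' \in M(p)$ contradicts fairness of $M$. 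Hence no such $p$ exists, and $a$ is indeed matched to her best admissible project; so $d$ induces $M$.

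\textbf{Main obstacle.} There is no deep obstacle here — the result is folklore (as the excerpt notes, cf. Lemma 3 of \citet{FleinerJanko2014}). The only points requiring care are bookkeeping ones: handling unmatched applicants and empty projects consistently (this is why the cutoff range includes $|A|+1$), and making sure the convention that project $p$ scores applicants in accordance with $\succ_p$ is used to translate "higher score" into "$\succ_p$-preferred." I would state the score-to-rank convention explicitly at the start of the proof and then the two inclusions fall out mechanically.
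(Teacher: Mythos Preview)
Your proof is correct and follows essentially the same approach as the paper's: the $(\Leftarrow)$ direction unpacks the definition of ``induced by cutoffs,'' and the $(\Rightarrow)$ direction sets $d(p)$ to the score of the worst applicant in $M(p)$. Your version is in fact more careful than the paper's terse argument, since you explicitly handle projects with $M(p)=\emptyset$ (by setting $d(p)=|A|+1$) and verify the contradiction with fairness in full detail; the paper simply asserts ``it is easy to see that $d$ induces $M$'' without addressing the empty-project case.
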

\begin{proof}
A matching induced by cutoffs is always fair by definition, since $p\succ_aM(a)$ implies that $a$ could not reach the cutoff score at $p$ so everyone assigned to $p$ has a higher rank than her. In the other direction let $M$ be a fair matching. For every project $p$ we set the cutoff $d(p)$ to be equal to the score of the lowest ranked applicant in $M(p)$. It is easy to see that $d$ induces $M$.
\qed\end{proof}

In the following sections we will consider the strong and weak stability concepts of \citet{KaKo17b}, and subsequently introduce our new solution concept of cutoff stability.
	
\vspace{-1em}
\section{Strong Stability}\label{sec:strong}

An applicant-project pair $(a, p)$ is a \textbf{blocking pair} for matching $M$ if:
\begin{itemize}
	\item applicant $a$ prefers project $p$ to the project they are currently matched to: $p \succ_a M(a)$, and
	\item either:
	\begin{itemize}
		 \item $p$ is under capacity and finds $a$ acceptable: $|M(p)|<c_p$ and $a \succ_p \emptyset$; or
		 \item $p$ prefers $a$ to one of its currently matched applicants: $\exists \  a'\in M(p)$ such that  $a \succ_p a'$
	\end{itemize}
\end{itemize}

\begin{definition}[Strong Stability]
We call a matching $M$ strongly stable if for any blocking pair $(a, p)$ for matching $M$, the following two conditions are satisfied:
\begin{itemize}
	\item $a'\succ_{p}a$ for all applicants $a'\in M(p)$
	\item The matching $M'=(M\cup \{(a, p)\}) \setminus \{(a,M(a))\}$ is not feasible.
\end{itemize}
\end{definition}

The first condition implies that we only allow the existence of blocking pairs involving a project $p$ that is under its maximum capacity $c_p$. The second condition implies that even if  slot is free at project $p$, adding applicant $a$ to project $p$ will result in a distributional constraint being violated so that $M'=(M\cup \{(a, p)\}) \setminus \{(a,M(a))\}$ is not feasible.
Thus, we allow blocking pairs that cannot be satisfied without violating our feasibility constraint to exist in a strongly stable matching.

Alternatively, we can define strong stability in terms of fairness and strong non-wastefulness, as follows. We say that matching $M$ is \emph{strongly non-wasteful} if for every pair $(a, p)\notin M$, $p\succ_aM(a)$ implies that $(M\cup \{(a, p)\})\setminus \{(a,M(a))\}$ is not feasible or project $p$ is at capacity. It is easy to see that $M$ is strongly stable if and only if it is fair and strongly non-wasteful.


Our first observation is that for our problem, a strongly stable matching does not exist. This follows from the observation that our model is more general than the hospital resident setting with disjoint regions~\citep{KaKo15}. We provide an adaptation of an example (Example 1) {of~\citet{KaKo17b}} for the sake of completeness.

\begin{example}[A strongly stable matching does not necessarily exist for the summer internship problem]\label{ex:unsolvable}
	
	Consider the following instance of the summer internship problem:
	\begin{align*}
		A=\{a_1,a_2\}&&
		\succ_{a_1}:p_2, p_1&&
		\succ_{a_2}:p_1, p_2\\
		P=\{p_1,p_2\}&&
		\succ_{p_1}:a_1, a_2&&
		\succ_{p_2}:a_2, a_1\\
		S=\{s\}&&
		P_s=\{p_1,p_2\}&&
		q_s=c_{p_1}=c_{p_2}=1
	\end{align*}

	It is easy to see that $|M|\leq 1$. If both applicants are unmatched then $(a_1,p_1)$ forms a blocking pair. Suppose without loss of generality that $a_1$ is matched in the \textcolor{black}{feasible} matching. If $M=\{(a_1,p_1)\}$ then $(a_1,p_2)$ is a blocking pair that does not satisfy the second condition of strong stability. If $M=\{(a_1,p_2)\}$ then $(a_2,p_2)$ is a blocking pair that does not satisfy the first condition of strong stability. Thus, every feasible matching admits a blocking pair that is not permitted under strong stability, and is therefore not strongly stable.
	\end{example}

Below we show that the problem of deciding the existence of a strongly stable matching is NP-complete. We reduce for Restricted MAX-SMTI, the problem of deciding whether there exists a complete stable matching for the stable marriage problem with incomplete lists and ties under the restriction that the preferences of the men are strict, and the preference list of each woman is either strict or consists solely of a tie of length two~\citep{MII+02a}.  A matching is said to be weakly stable if it is not blocked by a pair where both parties strictly prefer each other to their current partners.

First, we introduce an instance that will serve as the core of the construction imitating an indifferent woman.

\begin{example}[An instance with two strongly stable matchings, covering different agents]\label{ex:multiple}
\color{black} The following instance is the simplest possible to demonstrate the above mentioned property.
	\begin{align*}
		A=\{a_1,a_2\}&&
		\succ_{a_1}:p_1&&
		\succ_{a_2}:p_2\\
		P=\{p_1,p_2\}&&
		\succ_{p_1}:a_1&&
		\succ_{p_2}:a_2\\
		S=\{s\}&&
		P_s=\{p_1,p_2\}&&
		q_s=c_{p_1}=c_{p_2}=1
	\end{align*}

One can easily check that there are two strongly stable matchings:
$M_1=\{(a_1,p_1)\}$ and $M_2=\{(a_2,p_2)\}$.
\color{black}
\end{example}

Note that this example also shows that the Rural Hospitals' Theorem of  \citet{roth1986allocation} does not hold for strongly stable matchings, which says that always the same applicants are matched in every stable solution in a many-to-one college admissions problem. Actually, the same two matchings are also the only weakly stable matchings, and thus the only cutoff stable matchings, by the same argument.

Using the example as a gadget, we will show that deciding whether an instance of our problem has a strongly stable solution is an NP-complete problem.

\begin{theorem}\label{NP-completeness}
\textcolor{black}{Checking the existence of a strongly stable matching is NP-complete even for REG.}
	
\end{theorem}
\begin{proof}
\color{black}We describe the proof in the context of SIP, where all the supervisors have capacity one and each is responsible for at most two distinct projects. Given a solution, we can check whether it is strongly stable by considering each potential blocking pair in polynomial time, so the problem is in NP. For proving NP-hardness, we reduce again from Restricted MAX-SMTI problem~\citep{MII+02a} in two steps.

\emph{Part i)} Suppose that we have an instance $I$ of MAX-SMTI. First we will create an instance $I'$ of SIP, such that there is a one-to-one correspondence in between the weakly stable matchings in $I$ and the strongly (and cutoff and weakly) stable matchings in $I'$.
For $I$, we use the same notation of $W=W^s\cup W^t$ to denote the set of women with strict preferences and with a single tie, respectively, and let $U$ denote the set of men. Every man in $I$ will be replaced by an applicant in $I'$ with essentially the same preferences. Now, every woman in $w_j\in W^s$ will be replaced with a single project $p_j$ with identical preferences. For every woman $w_j\in W^t$ we create two projects, $p_j^1$ and $p_j^2$ that will correspond to the projects in Example \ref{ex:multiple}. If $u_i$ and $u_k$ were the two men in the single tie of $w_j$ in $I$ then let the two corresponding applicants $a_i$ and $a_k$ complete the instance of Example \ref{ex:multiple}. So the preference lists of the projects are $\succ_{p_j^1}: a_i$ and $\succ_{p_j^2}: a_k$, whilst $a_i$ has only $p_j^1$ in her preference list, and $a_k$ has only $p_j^2$ in her preference list. In the first part of the reduction we show that if $M$ is a weakly stable matching in $I$ then there is a corresponding strongly stable matching $M'$ in $I'$ with the same size, and vice versa. The correspondence between the matchings is as follows.
\begin{itemize}
\item for every $w_j\in W^s$ and $u_i\in U$, $(u_i,w_j)\in M \iff (a_i,p_j)\in M'$
\item for every $w_j\in W^t$ and $u_i\in U$, where $u_i$ is the first man in the tie of $w_j$, $(u_i,w_j)\in M \iff (a_i,p_j^1)\in M'$
\item for every $w_j\in W^t$ and $u_k\in U$, where $u_k$ is the second man in the tie of $w_j$, $(u_k,w_j)\in M \iff (a_k,p_j^2)\in M'$
\end{itemize}
We can observe that the sizes of $M$ and $M'$ are the same.

\emph{Part ii)} In the second part of the reduction we create an extended instance $I''$ of SIP by adding a gadget $G^*$, which is a copy of the unsolvable instance in Example \ref{ex:unsolvable} consisting of two projects $P^*=\{p_1^*,p_2^*\}$ and two applicants $A^*=\{a_1^*,a_2^*\}$, together with an additional project $p^*$. Let suppose that $p^*$ accepts one of the applicants in $G^*$, say $a_1^*$. Let $p^*$ be the most preferred project by $a_1^*$, so including this project in $G^*$ would turn this instance solvable by assigning $p^*$ to $a_1^*$ and $p_1^*$ to $a_2^*$. To link $G^*\cup p^*$ with the rest of $I'$ we put all the applicants in $I'$ ahead of $a_1^*$ in the preference list of $p^*$ in a random order, and we also append $p^*$ to the end of the preference list of each applicant in $I'$.

We will show that $I$ has a complete weakly stable matching if and only if $I''$ has a strongly stable matching. Let us suppose first that $M$ is a complete stable matching of $I$, we construct a strongly stable matching $M''$ of $I''$ as follows. First we create $M'$ in $I'$ that covers all the applicants in $I'$, and then we create $M''$ by adding $\{(a_1^*,p^*),(a_2^*,p_1^*)\}$. In the other direction, if $M''$ is a strongly stable matching in $I''$ then $p^*$ must be assigned to $a_1^*$, as otherwise gadget $G^*$ would not admit a strongly stable solution. Hence all the applicants in $I'$ must be matched to projects in $I'$, since any unmatched applicant would block with $p^*$ otherwise. Therefore, we can create the corresponding complete matching $M$ in $I$ that must be weakly stable due to part i) of the reduction.
\qed\end{proof}

\color{black}
Our hardness result is strong because it holds for a very restricted setting of \citet{KaKo15} that concerns disjoint regions (REG), even if at most two hospitals belong to each region.\footnote{Most of the computational hardness results for distributional constraints concern overlapping regions (see, e.g. \citet{GIKY+16a}).} This case can also occur when one hospital has a common upper quota for two different types of jobs, e.g., daytime and night shifts, or surgical and medical internship positions. Another motivating example is the Hungarian college admission scheme, where students can be admitted to a programme under two contracts, state-funded and privately-funded, and there is a common upper bound on them~\citep{BFIM10}.


\vspace{-1em}
\section{Weak Stability}\label{sec:weak}
	
In view of the non-existence and NP-completeness of checking the existence of strongly stable matchings, one can consider a weaker stability criterion. \citet{KaKo17b} proposed a weak stability concept for a setting that does not concern budgets but which has an abstract feasibility indicator function for any given matching. We present the definition in our terminology of applicants and projects.

\begin{definition}[Weak Stability]
	We call a matching $M$ weakly stable if for any blocking pair $(a, p)$ for matching $M$, the following two conditions are satisfied.
	\begin{itemize}
		\item $a'\succ_{p}a$ for all applicants $a'\in M(p)$
		\item \textcolor{black}{generalized matching} $M\cup \{(a, p)\}$ is not feasible.
	\end{itemize}
\end{definition}

Note the similarity in the definition of strong stability and weak stability. The only difference is that in the second condition, applicant $a$ can have two contracts: one with project $M(a)$ and another with the project $p$ she is blocking with. One way to see this is that in order for applicant $a$ to block with $p$, it must sign the contract with $p$ before it opts to annul its match with project $M(a)$.
We call any blocking pair that satisfies these condition \textbf{permitted under weak stability}. 

Alternatively, we can also define weak stability in terms of fairness and weak non-wastefulness. We say that matching $M$ is \emph{weakly non-wasteful} if for every pair $(a, p)\notin M$, $p \succ_aM(a)$ implies that $M\cup \{(a, p)\}$ is not feasible or project $p$ is at capacity. It is easy to see that $M$ is weakly stable if and only if it is fair and weakly non-wasteful, for a short proof see Proposition 1 in \citet{KaKo17b}.

We say that $p$ is \emph{unconstrained} for a feasible matching $M$ if for any $(a, p)\notin M$, $M\cup\{(a, p)\}$ remains feasible, i.e., when one more applicant can be added to project $p$ by keeping the solution feasible.

\begin{proposition}
A matching $M$ is weakly stable if and only if it is induced by cutoff scores $d$ such that for every unconstrained project $p$, $d(p)=0$.
\end{proposition}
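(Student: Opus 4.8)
The plan is to prove both directions using Proposition~\ref{prop:fair} together with the characterization of weak stability via fairness and weak non-wastefulness. Since a weakly stable matching is in particular fair, by Proposition~\ref{prop:fair} it is induced by \emph{some} cutoff scores; the content of the statement is that among all cutoffs inducing a given fair matching, weak non-wastefulness is exactly the condition that the cutoff at every unconstrained project can be taken to be $0$. So first I would fix a feasible matching $M$ and recall the canonical cutoff scores from the proof of Proposition~\ref{prop:fair}: $d(p)$ equals the score of the lowest-ranked applicant in $M(p)$, with $d(p) = |A|+1$ (equivalently, some value above every applicant's score) when $M(p) = \emptyset$.

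For the forward direction, suppose $M$ is weakly stable; I want to exhibit cutoffs inducing $M$ with $d(p) = 0$ whenever $p$ is unconstrained. I would start from the canonical cutoffs $d$ and lower each unconstrained project's cutoff to $0$, calling the result $d'$. Two things need checking: (i) $d'$ still induces $M$, and (ii) $d'$ is a legitimate cutoff function. For (i), the only way lowering $d(p)$ to $0$ could change the induced matching is if some applicant $a$ becomes admissible to an unconstrained $p$ that she prefers to $M(a)$ and is not already matched to $p$; but then $(a,p)$ with $p \succ_a M(a)$ would witness a violation of weak non-wastefulness, since $p$ unconstrained means $M \cup \{(a,p)\}$ is feasible, and $p$ cannot be at capacity as it is unconstrained — contradicting weak stability. (One must also note that lowering cutoffs on unconstrained projects does not make any previously matched applicant \emph{leave} her project, which is immediate since cutoffs only went down.) Here I should be a little careful that an applicant who under $d'$ would now be admissible to an unconstrained $p$ but whose $M(a)$ is even better stays put — that is automatic. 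For (ii), it is routine that $d'$ maps into the allowed range.

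For the reverse direction, suppose $M$ is induced by cutoffs $d$ with $d(p) = 0$ for every unconstrained $p$. By Proposition~\ref{prop:fair}, $M$ is fair. It remains to show $M$ is weakly non-wasteful. Take any $(a,p) \notin M$ with $p \succ_a M(a)$, and suppose for contradiction that $M \cup \{(a,p)\}$ is feasible and $p$ is not at capacity. Feasibility of $M \cup \{(a,p)\}$ for \emph{this particular} pair does not immediately give that $p$ is unconstrained (unconstrained requires $M \cup \{(a',p)\}$ feasible for \emph{all} such $a'$), so the cleanest route is: since $p \succ_a M(a)$ and $d$ induces $M$, applicant $a$ must fail the cutoff at $p$, i.e. $a$'s score at $p$ is strictly less than $d(p)$, hence $d(p) > 0$, hence by hypothesis $p$ is constrained — meaning there exists $(a'',p) \notin M$ with $M \cup \{(a'',p)\}$ infeasible. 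I then need to connect the infeasibility for $a''$ back to the pair $(a,p)$: by the heredity/anonymity of feasibility (the feasibility function depends only on the numbers $|M(p)|$, not identities), $M \cup \{(a,p)\}$ and $M \cup \{(a'',p)\}$ have the same project-load profile, so one is feasible iff the other is — contradicting that $M \cup \{(a,p)\}$ was assumed feasible. This anonymity step is the crux and the place I would be most careful, but it is exactly the property emphasized in the excerpt's description of HER (feasibility is ``anonymous''), so it is available. Combining the two directions with Proposition~1 of \citet{KaKo17b} (weak stability $\iff$ fair and weakly non-wasteful) completes the proof.
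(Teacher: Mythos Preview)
Your proof is correct and follows essentially the same route as the paper's: both reduce the equivalence to showing that weak non-wastefulness is equivalent to being able to set $d(p)=0$ at every unconstrained project, and both construct the forward-direction cutoffs by keeping the canonical (lowest-admitted-score) cutoffs at constrained projects and dropping them to zero at unconstrained ones. Your treatment is in fact more careful than the paper's terse version, since you make explicit the anonymity step in the reverse direction (that $M\cup\{(a,p)\}$ and $M\cup\{(a'',p)\}$ have the same load profile and hence the same feasibility status), which the paper's proof silently relies on when it passes from ``$p$ is not unconstrained'' to weak non-wastefulness at the specific pair $(a,p)$.
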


\begin{proof}
Following Proposition \ref{prop:fair}, we only need to show that the condition of every unconstrained project having zero cutoff is equivalent to weak non-wastefulness. Suppose first that $M$ is weakly non-wasteful. This means that there cannot exist an unconstrained project $p$ and an applicant $a$ such that $p\succ_aM(a)$, so indeed for all unconstrained project we can set the cutoff to be zero, and for the constrained projects we just set the cutoff to be equal to the score of lowest ranked applicant assigned. In the other direction, if we have cutoff scores satisfying that every unconstrained project has cutoff score zero, then there cannot exist an applicant $a$ such that $p\succ_aM(a)$ for an unconstrained project $p$.
\qed\end{proof}

In the conference version of our paper~\citep{aziz2020summer} we presented a polynomial-time algorithm that always returns a weakly stable matching based on the algorithm by \citet{KaKo17b} (Appendix B.3). 
In this extended version of our conference paper, we strengthen this result by showing that a so-called cutoff stable matching can also be computed efficiently for SIP, by designing an algorithm that also works for HER.

\vspace{-1em}
\section{Cutoff stability}\label{sec:cutoff}


In this section, we discuss cutoff stability that applies to any matching problem with feasibility constraints.

We say that $M$ is \emph{cutoff non-wasteful} if $M$ is non-wasteful and for every pair $(a, p)\notin M$, $p\succ_a M(a)$ implies that either
\begin{itemize}
\item $(M\cup \{(a, p)\})\setminus \{(a,M(a))\}$ is not feasible, or
\item there exists another applicant $a'\notin M(p)$, such that $a'\succ_p a$, $p\succ_{a'}M(a')$ and $(M\cup \{(a',p)\})\setminus \{(a',M(a'))\}$ is not feasible, or
\item project $p$ is at capacity
\end{itemize}
We say a matching is \emph{cutoff stable} if it is fair and cutoff non-wasteful.
\textcolor{black}{To get an intuition about the meaning of cutoff stability, and in particular about the second condition above, we first give an alternative characterization using minimal cutoff scores.}

Let us define the notion of a matching induced by minimal cutoff scores, as explored by \citet{FleinerJanko2014} in a model without distributional constraints, and by \citet{BiroKiselgof2015} for a college admission model with ties. Let $d^{-p}$ denote the cutoff scores after decreasing the cutoff of $p$ by one, and keeping the other cutoffs the same, i.e., $d^{-p}(p)=d(p)-1$, and $d^{-p}(p')=d(p')$ for every $p'\neq p$. We say that cutoffs $d$ are \emph{minimal} if we cannot decrease the cutoff score of any project without making the induced matching infeasible. More formally for every project $p$, either $d(p)=0$ or the matching induced by $d^{-p}$, which we call $M^{-d}$, is not feasible.

\begin{proposition}\label{prop:minimal}
A matching is cutoff stable if and only if it is induced by minimal cutoff scores.
\end{proposition}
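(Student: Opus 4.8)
The plan is to reduce everything to Proposition~\ref{prop:fair}. A cutoff stable matching is fair by definition, and a matching induced by any cutoff scores (minimal or not) is fair; so by Proposition~\ref{prop:fair} it suffices to prove, for a fixed fair matching $M$, that $M$ is cutoff non-wasteful if and only if $M$ is induced by some system of \emph{minimal} cutoff scores. The residual ``non-wastefulness'' clause in the definition of cutoff non-wastefulness will be subsumed by case~(i) applied with $M(a)=\emptyset$.

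The technical core is a ``one-step'' lemma describing how the induced matching changes when a single cutoff is lowered by one. Fix cutoffs $d$ inducing $M$ and a project $p$ with $d(p)>0$. Passing to $d^{-p}$ leaves admissibility unchanged at every project other than $p$, and at $p$ it makes admissible exactly one additional applicant: since scores at $p$ are distinct, there is at most one acceptable applicant $a^\ast_p$ with score at $p$ equal to $d(p)-1$. Hence, if $a^\ast_p$ does not exist, or exists but weakly prefers $M(a^\ast_p)$ to $p$, then $d^{-p}$ still induces $M$; otherwise $a^\ast_p$ moves to $p$ and $d^{-p}$ induces $M^{-d}=(M\cup\{(a^\ast_p,p)\})\setminus\{(a^\ast_p,M(a^\ast_p))\}$ (which fails to be a matching at all, and so is infeasible, when $p$ was already at capacity). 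In particular $d$ is minimal exactly when, for every $p$ with $d(p)>0$, the applicant $a^\ast_p$ exists, is acceptable to $p$, has $p\succ_{a^\ast_p}M(a^\ast_p)$, and $(M\cup\{(a^\ast_p,p)\})\setminus\{(a^\ast_p,M(a^\ast_p))\}$ is infeasible.

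For the ``if'' direction, let $M$ be induced by minimal cutoffs $d$, and take any pair $(a,p)\notin M$ with $p\succ_aM(a)$, $a$ acceptable to $p$ (if $a$ is unacceptable to $p$, case~(i) holds vacuously). If $p$ is at capacity, case~(iii) holds. Otherwise, since $M$ is induced by $d$ and $a$ prefers $p$ to $M(a)$, applicant $a$ is not admissible at $p$, so her score there is below $d(p)$ and $d(p)>0$. By minimality and the one-step lemma, $a^\ast_p$ exists with score $d(p)-1\ge\mathrm{score}_p(a)$, is acceptable to $p$, is not in $M(p)$, satisfies $p\succ_{a^\ast_p}M(a^\ast_p)$, and $(M\cup\{(a^\ast_p,p)\})\setminus\{(a^\ast_p,M(a^\ast_p))\}$ is infeasible. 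If $a^\ast_p=a$ this is exactly case~(i); if $a^\ast_p\ne a$ then $a^\ast_p\succ_p a$ and case~(ii) holds with $a'=a^\ast_p$. Thus $M$ is cutoff non-wasteful, hence cutoff stable.

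For the ``only if'' direction, let $M$ be cutoff stable. Among all cutoff systems inducing $M$ --- a nonempty finite set by Proposition~\ref{prop:fair} --- I first observe closure under componentwise minimum: if $d^1,d^2$ induce $M$ then under $d=\min(d^1,d^2)$ an applicant is admissible at $p$ iff she is admissible under $d^1$ or under $d^2$, so her best admissible project is the better of her two assignments, namely $M(a)$; thus $d$ induces $M$. Hence there is a componentwise-least inducing system $d^\ast$. Fix $p$ with $d^\ast(p)>0$; since $(d^\ast)^{-p}\le d^\ast$ and $(d^\ast)^{-p}\ne d^\ast$, leastness forces $(d^\ast)^{-p}$ to not induce $M$, so by the one-step lemma either $(d^\ast)^{-p}$ induces no valid matching (and we are done) or it induces $M^{-d^\ast}=(M\cup\{(a^\ast_p,p)\})\setminus\{(a^\ast_p,M(a^\ast_p))\}$ with $a^\ast_p$ acceptable to $p$ and $p\succ_{a^\ast_p}M(a^\ast_p)$. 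Assuming for contradiction $M^{-d^\ast}$ is feasible, apply cutoff non-wastefulness of $M$ to $(a^\ast_p,p)\notin M$: option~(i) contradicts feasibility of $M^{-d^\ast}$; option~(iii) is impossible since adding $a^\ast_p$ to a full $p$ could not have yielded the valid matching $M^{-d^\ast}$; and option~(ii) fails because any $a'$ with $a'\succ_p a^\ast_p$ has score at least $d^\ast(p)$, so $a'$ is admissible at $p$ under $d^\ast$ but not in $M(p)$, forcing $M(a')\succ_{a'}p$, contradicting $p\succ_{a'}M(a')$. Hence $M^{-d^\ast}$ is infeasible for every such $p$, so $d^\ast$ is minimal and induces $M$. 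I expect the main obstacle to be exactly this direction: seeing that the naive choice $d(p)=\mathrm{score}_p(\text{worst applicant in }M(p))$ need not be minimal, and that one must instead use the componentwise-least inducing system (proving the inducing systems form a meet-semilattice); after that, the one-step lemma and cutoff non-wastefulness finish the argument mechanically.
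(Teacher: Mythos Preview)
Your proof is correct and follows the same approach as the paper: reduce via Proposition~\ref{prop:fair} to the equivalence of minimality with cutoff non-wastefulness, then analyze what happens to the induced matching when a single cutoff is decremented. Your treatment of the ``only if'' direction is more careful than the paper's --- the paper simply picks cutoffs that are ``minimal in the sense that no smaller cutoff scores can induce $M$'' (existence by finiteness, without your meet-semilattice argument) and then asserts in one line that cutoff non-wastefulness forces $M^{-p}$ to be infeasible, omitting the explicit elimination of cases~(ii) and~(iii) that you carry out --- but the underlying argument is the same.
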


\begin{proof}
	
Due to Proposition \ref{prop:fair}, we only need to show that the minimality of the cutoff score is equivalent to cutoff non-wastefulness. Suppose first that matching $M$ is induced by minimal cutoff scores $d$, which means that for any project $p$, if $d(p)>0$ then $d^{-p}$ is not feasible.
Consider project $p$ with $d(p)>0$ and let $a$ be the applicant who is assigned a score of $d^{-p}(p)$ by project $p$. $M^{-p}$, the matching induced $d^{-p}$, is not feasible which means $M^{-p}\neq M$. Therefore, $M^{-p}=(M\cup \{(a, p)\})\setminus \{(a,M(a))\}$ and thus $p\succ_a M(a)$, which in turn means that $a$ is the highest ranked applicant who forms a blocking pair with $p$. Thus, the induced matching $M$ is cutoff non-wasteful.
Now suppose that $M$ is a cutoff stable matching and let $d$ be some cutoff scores that induce $M$ which are minimal in that sense that no smaller cutoff scores can induce $M$. Consider project $p$ with $d(p)>0$. The matching $M^{-p}$ induced by $d^{-p}$ must be different from $M$. Since $M$ is cutoff non-wasteful, $M^{-p}$ must be infeasible for every project $p$ with $d(p)>0$ and thus cutoffs $d$ are minimal.
\qed\end{proof}

\textcolor{black}{Going back to the interpretation of the second condition in the first definition of cutoff stability, we can see that it is possible that a blocking pair $(a, p)$ exists that would violate strong stability, but then there must be another blocking pair $(a',p)$, where $a'$ is higher in the preference of $p$ than $a$ and blocking with this pair would make the matching infeasible.}


We can show the natural correspondence in between the three solution concepts as follows.

\begin{proposition}
Every strongly stable matching is cutoff stable, and every cutoff stable matching is weakly stable.
\end{proposition}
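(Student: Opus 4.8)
The plan is to split the statement into its two implications and, in each, to use that all three of weak, cutoff, and strong stability are defined as ``fairness plus a non-wastefulness condition'', so that fairness transfers for free (a matching obtained as one of these is fair in the same sense) and only the non-wastefulness conditions need to be compared.

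First implication: every strongly stable matching $M$ is cutoff stable. I would start from the characterisation that $M$ is fair and strongly non-wasteful, and then observe that strong non-wastefulness implies cutoff non-wastefulness essentially by inspection: for a pair $(a,p)\notin M$ with $p\succ_a M(a)$, strong non-wastefulness already guarantees that $(M\cup\{(a,p)\})\setminus\{(a,M(a))\}$ is infeasible or that $p$ is at capacity, and these are precisely the first and third alternatives in the disjunction defining cutoff non-wastefulness. Hence $M$ is fair and cutoff non-wasteful, i.e.\ cutoff stable. This direction should be immediate once the definitions are unfolded.

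Second implication: every cutoff stable matching $M$ is weakly stable. Again fairness is retained, so it suffices to show cutoff non-wastefulness implies weak non-wastefulness. Fix $(a,p)\notin M$ with $p\succ_a M(a)$ and consider the three cases of cutoff non-wastefulness. If $p$ is at capacity, the weak non-wastefulness condition holds directly. If $(M\cup\{(a,p)\})\setminus\{(a,M(a))\}$ is infeasible, then since this matching has weakly fewer applicants at every project than $M\cup\{(a,p)\}$ (one fewer at $M(a)$ when $a$ was matched, the same elsewhere), the heredity property forces $M\cup\{(a,p)\}$ to be infeasible as well. In the remaining case there is some $a'\notin M(p)$ with $a'\succ_p a$, $p\succ_{a'} M(a')$, and $(M\cup\{(a',p)\})\setminus\{(a',M(a'))\}$ infeasible; by the same heredity step $M\cup\{(a',p)\}$ is infeasible, and since $M\cup\{(a',p)\}$ and $M\cup\{(a,p)\}$ assign the same number of applicants to every project (both put one more than $M$ at $p$, because $a,a'\notin M(p)$, and agree elsewhere), anonymity of the feasibility function yields that $M\cup\{(a,p)\}$ is infeasible too. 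In every case the condition of weak non-wastefulness is met, so $M$ is weakly stable.

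I expect the only non-routine point to be this last case: one must convert ``adding the higher-ranked applicant $a'$ to $p$ is infeasible'' into ``adding $a$ to $p$ is infeasible'', which is exactly where the anonymity of the feasibility function (it depends only on the vector of numbers of applicants per project, not on their identities) is essential. Heredity must also be applied in the correct direction throughout, namely passing from a matching to one with weakly fewer applicants at each project; beyond that the argument is bookkeeping with the definitions.
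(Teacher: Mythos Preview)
Your proposal is correct and follows the same overall route as the paper: both reduce the claim to showing that strong non-wastefulness $\Rightarrow$ cutoff non-wastefulness $\Rightarrow$ weak non-wastefulness, with fairness carried along for free. The paper's own proof is a single sentence (``By definition, strong non-wastefulness implies cutoff non-wastefulness, and cutoff non-wastefulness implies weak non-wastefulness''), so your version is in fact more complete: you correctly identify that the second implication is \emph{not} literally immediate from the definitions but relies on heredity (to pass from infeasibility of $(M\cup\{(a,p)\})\setminus\{(a,M(a))\}$ to infeasibility of $M\cup\{(a,p)\}$) and on anonymity of the feasibility function (to transfer infeasibility from adding $a'$ to adding $a$, using that both assignments have the same occupancy vector). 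These are exactly the points the paper's one-liner sweeps under the rug, and your handling of them is sound.
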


\begin{proof}
By definition, strong non-wastefulness implies cutoff non-wastefulness, and cutoff non-wastefulness implies weak non-wastefulness.
\qed\end{proof}

Furthermore, these notions do not coincide, as illustrated in the following example.

\begin{example}[An instance where the sets of strongly stable, cutoff stable and weakly stable matchings are distinct]
	\begin{align*}
		A=\{a_1, a_2, a_3\}\\
		\succ_{a_1}:p_1, p_2, p_3&&
		\succ_{a_2}:p_2, p_1&&
		\succ_{a_3}:p_3\\
		P=\{p_1, p_2, p_3\} && c_{p_1}=c_{p_2}=c_{p_3}=1\\
		\succ_{p_1}:a_2, a_1&&
		\succ_{p_2}:a_1, a_2&&
		\succ_{p_3}:a_1, a_3 \\
		S=\{s\}&&
		P_{s}=\{p_1, p_2, p_3\}&&
		q_{s}=2 \\
	\end{align*}

Due to the supervisor capacity every feasible matching has size at most two, and any matching of size fewer that two is weakly wasteful. Thus, all the relevant matchings for weak, cutoff and strong stability have size two, so we consider all the matchings of size two below.
	$M_1=\{(a_1,p_1),(a_2,p_2)\}$ and $M_2=\{(a_1,p_2),(a_2,p_1)\}$ are strongly stable.
	
	$M_3=\{(a_1,p_2),(a_3,p_3)\}$ is not strongly stable, due to blocking pair $(a_1,p_1)$. However, $M_3$ is cutoff stable, since $a_2\succ_{p_1} a_1$ and $M_3\cup \{(a_2,p_1)\}$ is not feasible.
	
	$M_4=\{(a_1,p_3),(a_2,p_1)\}$ is not cutoff stable, due to blocking pairs $(a_1,p_2)$ and $(a_2,p_2)$. However, $M_4$ is weakly stable as adding an applicant to $a_2$ results in an infeasible matching. All other matchings of size two are not fair or not feasible.
\end{example}

Note that for the classical college admission model these notions are equivalent to stability as formulated by \citet{GaSh62}. Further theoretical findings about cutoff scores for this basic model are discussed by \citet{AzevedoLeshno2016}.


\vspace{-0.5em}
\subsection{Algorithm for computing a cutoff stable matching}

We present an algorithm (Algorithm~\ref{algo:KK}) that shows the existence of a cutoff stable matching for every instance of HER, and can compute a cutoff stable matching for an instance of SIP in strongly polynomial-time. 
Algorithm~\ref{algo:KK} has similarities with the algorithm proposed by \citet{KaKo17b} which finds a weakly stable matching for HER. In contrast to the algorithm of \citet{KaKo17b}, we do not explicitly work with a set of blocking of pairs, but with cutoff scores. Furthermore, a blocking pair $(a, p)$ was satisfied in \citet{KaKo17b} whenever the new matching $M\cup\{(a, p)\}$ stayed feasible, leading to a weakly stable solution in the end, whilst we satisfy a blocking pair $(a, p)$ if $(M\cup\{(a, p)\})\setminus \{(a,M(a))\}$ stay feasible, leading to a cutoff stable matching. Algorithm~\ref{algo:KK} can also be viewed as a modified version of the Fleiner-Jank\'o score-decreasing algorithm (see subsection 4.3 in \citet{FleinerJanko2014}). Whereas \citet{FleinerJanko2014} do not consider distributional constraints, our goal is to achieve stability properties under general heredity constraints.
Finally, in independent and recent work, \citet{KaKo20a} consider a fixed-point approach based on cutoffs. However, their heredity constraints apply to individual hospitals/projects and cannot capture SIP or regional constraints.

The idea of Algorithm~\ref{algo:KK} is simple. We start with maximum cutoffs that induce the empty matching and then we gradually decrease them until we can no longer do so without making the induced matching infeasible. Note that for the classical college admission problem this process is equivalent to the college proposing Gale-Shapley algorithm \citep{GaSh62}, that was observed in the Turkish college admission practice \citep{balinski1999tale}, and also in the Hungarian college admission scheme \citep{BiroKiselgof2015} in a more general form, as ties have been present in the rankings.

\begin{algorithm}		
	\begin{algorithmic}
	\REQUIRE lists $\succ_p $ for all $p\in P$ and $\succ_a$ for all $a\in A$; feasibility function $f$; project order $P^*=(p_1,...,p_k)$
	\ENSURE Matching $M$ and corresponding cutoffs $d_M$
	\end{algorithmic}
\begin{algorithmic}[1]
	\STATE Initialize $M$ to empty and $d_M(p)=|A|+1$ for every project $p$.
    \WHILE{Cutoff $d_M$ are not minimal}
    \STATE Locate the first $p_j$ in the list $P^*$ such that $M^{-p_j}$ is feasible.
    \STATE Let $M=M^{-p_j}$ and $d_M=d_M^{-p_j}$.
    \ENDWHILE 	
\end{algorithmic}
 \caption{Algorithm for matching with heredity constraints. }
 \label{algo:KK}
\end{algorithm}

\begin{theorem}\label{prop:weaklystable}
	A cutoff stable matching always exists for a matching problem under any set of distributional constraints that can be represented by a feasibility function, i.e., for the HER model. Algorithm~\ref{algo:KK} produces one such matching.
\end{theorem}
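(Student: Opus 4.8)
The plan is to show that Algorithm~\ref{algo:KK} terminates and that its output is cutoff stable, which by Proposition~\ref{prop:minimal} reduces to showing it terminates with minimal cutoffs. First I would argue correctness of each iteration: the loop only runs while $d_M$ is not minimal, i.e., while there exists some project $p$ with $d_M(p)>0$ whose one-step decrease $M^{-p}$ is feasible. Hence at line~3 the search for the first such $p_j$ in the fixed order $P^*$ always succeeds, so the algorithm is well-defined until the loop condition fails. When the loop exits, $d_M$ is by definition minimal, and since $M$ is at all times the matching induced by the current cutoffs (an invariant maintained by initializing $d_M(p)=|A|+1$, which induces $\emptyset$, and then only ever setting $M=M^{-p_j}$ together with $d_M=d_M^{-p_j}$), the output is a fair matching induced by minimal cutoffs. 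By Proposition~\ref{prop:minimal} it is therefore cutoff stable, and by the correspondence proposition it is in particular weakly stable and the existence claim follows for all of HER since the only thing used is the feasibility function $f$.

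The main obstacle is termination: I must rule out the possibility that decreasing a cutoff at some project and then later re-examining things causes an infinite loop. The key observation is that the algorithm \emph{never increases} any cutoff --- every step strictly decreases exactly one coordinate of $d_M$ by one, and cutoffs are nonnegative integers bounded below by $0$. Thus the vector $d_M \in \{0,1,\dots,|A|+1\}^{|P|}$ strictly decreases (coordinatewise, hence in the sum $\sum_{p} d_M(p)$) at each iteration, so the algorithm performs at most $(|A|+1)\cdot|P|$ iterations and terminates. For the SIP-specific claim of strongly polynomial running time, I would note that each iteration requires at most $|P|$ feasibility checks (scanning $P^*$ for the first feasible $M^{-p_j}$), and each feasibility check is a max-flow computation running in $O(\max\{|S|,|P|\}^3)$ time by Theorem~\ref{thm:feasibilitycheck}; combining with the $O(|A||P|)$ bound on iterations gives a polynomial bound, and checking whether $d_M$ is minimal is itself just $|P|$ feasibility checks.

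One point that deserves care in the write-up is the invariant that $M$ equals the matching induced by $d_M$ throughout. This holds at initialization and is preserved because the update is precisely ``replace $(d_M, M)$ by $(d_M^{-p_j}, M^{-p_j})$'' where $M^{-p_j}$ was \emph{defined} as the matching induced by $d_M^{-p_j}$; feasibility of $M^{-p_j}$ is exactly the condition checked at line~3, so the stored matching is always feasible. Hence at termination we have a feasible, fair matching whose cutoffs are minimal, which is the definition (via Proposition~\ref{prop:minimal}) of cutoff stability, completing the proof.
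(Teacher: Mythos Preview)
Your proposal is correct and follows essentially the same approach as the paper: maintain that $M$ is always the (feasible, fair) matching induced by the current cutoffs, and appeal to Proposition~\ref{prop:minimal} once the loop exits with minimal cutoffs. You are simply more explicit than the paper about termination (the paper defers that to Theorem~\ref{kk:runningtime}) and about the invariant; note that your running-time paragraph properly belongs to Theorem~\ref{kk:runningtime} rather than to this statement.
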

\begin{proof}
Note that $M$ remains feasible and also fair during the algorithm since it is induced by cutoffs. The final matching is cutoff stable, due to Proposition \ref{prop:minimal}, since the algorithm terminates when the cutoffs are minimal.
\qed\end{proof}

We remark here that the argument for Theorem~\ref{prop:weaklystable} does not require projects to be selected in the order $P^*$.


The next theorem shows that as long as the feasibility of a matching can be tested in polynomial time, Algorithm~\ref{algo:KK} runs in polynomial time.

\begin{theorem}\label{kk:runningtime}
	Suppose checking $f(w)$ takes $t$ time. Then, the running time of Algorithm~\ref{algo:KK} is $O(|A|\cdot|P|^2t)$.
\end{theorem}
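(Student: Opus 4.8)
The plan is to bound the two nested loops separately: the number of iterations of the \textbf{while} loop, and the cost of one iteration of its body.

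First I would bound the number of iterations. Each pass through the body replaces $d_M$ by $d_M^{-p_j}$ for a \emph{single} project $p_j$, which decreases the potential $\Phi(d_M):=\sum_{p\in P} d_M(p)$ by exactly one. Since $d_M(p)\in\{0,1,\dots,|A|+1\}$ for every $p$ and $d_M$ is initialised to the constant $|A|+1$, we have $\Phi(d_M)=|P|(|A|+1)$ at the start and $\Phi(d_M)\ge 0$ always; hence the loop runs at most $|P|(|A|+1)=O(|A||P|)$ times. (Such a $p_j$ is guaranteed to exist whenever the guard holds, since by definition the cutoffs fail to be minimal precisely when some project $p$ has $d_M(p)>0$ and $M^{-p}$ feasible.)

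Second I would bound the work per iteration. Both the evaluation of the guard ("are the cutoffs minimal?") and the instruction "locate the first $p_j$ with $M^{-p_j}$ feasible" are realised by the same left-to-right scan of $P^*=(p_1,\dots,p_k)$: for each $p_j$ with $d_M(p_j)>0$ we form $M^{-p_j}$, call $f$ on it, and stop at the first feasible one (declaring minimality if none is found). This scan touches at most $|P|$ projects, so it suffices to show that one project costs $O(t)$ up to lower-order terms. The crucial point is that $M^{-p_j}$ must \emph{not} be recomputed from scratch: lowering $d_M(p_j)$ by one makes at most one new applicant admissible at $p_j$, namely the applicant $a$ whom $p_j$ assigns the score $d_M(p_j)-1$, and — exactly as in the proof of Proposition~\ref{prop:minimal} — either $M^{-p_j}=M$ or $M^{-p_j}=(M\cup\{(a,p_j)\})\setminus\{(a,M(a))\}$. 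With a one-time $O(|A||P|)$ preprocessing that tabulates, for each project, the applicant at each rank, and, for each applicant, the position of each project in her list, one locates $a$ and decides which case occurs in $O(1)$ time; moreover the count vector $(|M^{-p_j}(p)|)_{p\in P}$ passed to $f$ differs from $(|M(p)|)_{p\in P}$ in at most two coordinates, so it is produced in $O(1)$ amortised time. Thus each project contributes $O(1)$ bookkeeping plus one feasibility call of cost $t$, and one iteration costs $O(|P|t)$.

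Combining the bounds yields $O(|A||P|)\cdot O(|P|t)=O(|A||P|^2 t)$ (the $O(|A||P|)$ preprocessing and the $O(|A||P|^2)$ total bookkeeping are dominated since $t\ge 1$), which is the claim. The main obstacle is the second step: one must carefully verify that the induced matching changes by at most one applicant--project pair when a single cutoff drops by one — otherwise the inner scan could secretly cost $\Theta(|A||P|)$ per project — and must check that "null" moves with $M^{-p_j}=M$ (which can genuinely occur and still decrement a cutoff) do not endanger the iteration count; they do not, because $\Phi(d_M)$ still strictly decreases.
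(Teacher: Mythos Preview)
Your proposal is correct and follows essentially the same approach as the paper: bound the number of while-iterations by $(|A|+1)|P|$ via the total cutoff sum, and bound each iteration by $O(|P|t)$ using the observation that $M^{-p_j}$ differs from $M$ in at most one applicant--project pair (so forming it and testing feasibility costs $O(t)$ per project). Your write-up is in fact more careful than the paper's, since you make explicit the potential-function argument, the preprocessing tables, and the harmless ``null move'' case $M^{-p_j}=M$.
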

\begin{proof}
We decrease the cutoffs in at most $(|A|+1)|P|$ rounds. In every round we potentially need to check the matching induced by decrementing the cutoff of each of $|P|$ projects. To do this for project $p_j$, we compute matching $M^{-p_j}$ from the current matching $M$, by simply comparing whether the newly admitted applicant $a_i$ with score $d_M^{-p_j}$ from $p_j$ prefers $p_j$ to her current match $M(a_i)$, and if so, whether the new matching $(M\cup\{(a_i,p_j)\})\setminus \{(a_i,M(a_i))\}$ is feasible. By assuming that checking the feasibility of a matching takes $t$ time, the overall run time of the algorithm is $O(|A||P|^2t)$.
\qed\end{proof}

We note that Algorithm~1 can be applied to the summer internship problem where the feasibility function is based on the supervisor budgets. Therefore, for the summer internship problem a cutoff stable matching exists. Furthermore, from Theorem~\ref{thm:feasibilitycheck} we know that the feasibility of a matching can be checked in polynomial time, and thus a cutoff stable matching can be found in polynomial time.


Additionally, using the cutoff-decreasing algorithm we can find a natural relationship between weakly stable and cutoff stable matchings.
\begin{proposition}
Every weakly stable matching that is not cutoff stable is Pareto-dominated by a cutoff stable matching for the applicants.
\end{proposition}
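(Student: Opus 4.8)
The plan is to run Algorithm~\ref{algo:KK}, but seeded with $M$ together with cutoff scores that induce it (rather than with the empty matching and maximal cutoffs), and to exploit the fact that lowering cutoffs can only help applicants.

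First I would use that a weakly stable matching is in particular fair, so by Proposition~\ref{prop:fair} there are cutoff scores $d$ inducing $M$; concretely set $d(p)$ to the score of the lowest-ranked applicant in $M(p)$, and $d(p)=|A|+1$ when $M(p)=\emptyset$. I would then take $(M,d)$ as the initial state of the while-loop of Algorithm~\ref{algo:KK}. This is legitimate: $M$ is feasible, the loop only ever manipulates a feasible matching together with cutoffs inducing it, and by the remark following Theorem~\ref{prop:weaklystable} the order in which projects are scanned is irrelevant. As in the proof of Theorem~\ref{prop:weaklystable}, the loop terminates, and by Proposition~\ref{prop:minimal} its output $M'$, being induced by minimal cutoffs, is cutoff stable.

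The key step is a monotonicity observation. Throughout the execution the cutoff vector is coordinate-wise non-increasing, since each iteration lowers exactly one coordinate by one. Hence for every applicant $a$ the set of projects to which $a$ is admissible only grows along the run, so the most-preferred project among $a$'s admissible projects only moves up her preference list. Since the matching induced by a cutoff vector assigns each applicant precisely that best admissible project, it follows that $M'(a)\succsim_a M(a)$ for every applicant $a$ (using the usual convention that being matched is preferred to being unmatched). Thus $M'$ weakly Pareto-dominates $M$ for the applicants.

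It remains to get strictness, which comes straight from the hypothesis: $M$ is not cutoff stable whereas $M'$ is, so $M\neq M'$ as sets of pairs, hence there is an applicant $a$ with $M'(a)\neq M(a)$; combined with $M'(a)\succsim_a M(a)$ this forces $M'(a)\succ_a M(a)$. Therefore $M'$ Pareto-dominates $M$ for the applicants, as claimed. The only genuinely delicate point is the claim that the cutoff-decreasing loop still works correctly when started from an arbitrary feasible matching and cutoffs inducing it instead of from the canonical start; but this is immediate, since the loop's dynamics depend only on the current (matching, cutoffs) pair, the quantity $\sum_p d(p)$ strictly decreases each round while staying nonnegative, and the stopping condition is exactly minimality. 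Everything else reduces to the monotonicity of admissible sets under a coordinate-wise decrease of the cutoffs.
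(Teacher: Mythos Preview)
Your proof is correct and follows essentially the same approach as the paper: start from cutoffs inducing the weakly stable matching $M$, run the cutoff-decreasing process until the cutoffs are minimal, and use that decreasing cutoffs can only enlarge each applicant's admissible set and hence weakly improve her match, with strictness coming from $M\neq M'$. Your write-up is more explicit than the paper's (in particular about seeding Algorithm~\ref{algo:KK} from $(M,d)$, about termination via the decreasing potential $\sum_p d(p)$, and about handling projects with $M(p)=\emptyset$), but the underlying argument is the same.
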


\begin{proof}
Let $M$ be a weakly stable matching that is not cutoff stable. Since $M$ is fair, it can be induced by cutoff scores, but these cutoff scores are not minimal, since $M$ is not cutoff stable. Therefore, we can gradually decrease these cutoff scores as described in the cutoff decreasing process until we reach a cutoff stable matching $M'$. Note that $M'$ Pareto-dominates $M$ since decreasing cutoff scores can only make the applicants better off, and every change in the matching means a strict improvement for at least one applicant.
\qed\end{proof}

Although Algorithm~\ref{algo:KK} satisfies cutoff stability, it also has some drawbacks.
Next we establish some properties of the algorithm.

\begin{theorem}\label{th:notgood}
	The following properties hold for Algorithm~\ref{algo:KK}.
	\begin{enumerate}
		\item Algorithm~\ref{algo:KK} is not strategyproof for the applicants.
		\item  Algorithm~\ref{algo:KK} does not always find a strongly stable matching whenever one exists.
		\item Changing the order of projects ordered after $p$ by $P^*$ can change $p$'s allocation.
		\item There exist cutoff stable matchings that cannot be produced as a result of Algorithm~\ref{algo:KK} by changing the project order $P^*$.
	\end{enumerate}
\end{theorem}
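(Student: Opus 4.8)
The plan is to prove each of the four claims by exhibiting (or analysing) a small explicit instance, in the spirit of Examples~\ref{ex:unsolvable}--\ref{ex:multiple} that are already in the paper. In fact Example~\ref{ex:multiple} is perfectly suited to several of the parts simultaneously, since it has exactly two cutoff stable matchings $M_1$ and $M_2$ with a built-in asymmetry once we fix a project order $P^*$. So the overall strategy is: first settle parts (2)--(4) using (variants of) the four-cycle gadget of Example~\ref{ex:multiple}, and then handle the strategyproofness claim (1) by a separate tiny instance or by leveraging the same gadget with an extra applicant.

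For part (3) and part (2), I would start from Example~\ref{ex:multiple}. Recall the only cutoff stable (= weakly stable = strongly stable) matchings there are $M_1 = \{(a_1,p_1),(a_2,p_2),(a_4,p_4)\}$ and $M_2 = \{(a_2,p_2),(a_3,p_3),(a_4,p_4)\}$; note $M_1$ fills $p_1$ while $M_2$ fills $p_3$. Running Algorithm~\ref{algo:KK} we start with all cutoffs at $|A|+1$ and decrease. I would trace the cutoff-decreasing process for the order $P^* = (p_1,p_2,p_3,p_4)$ versus $P^* = (p_3,p_2,p_1,p_4)$ (i.e. swapping the relative order of $p_1$ and $p_3$, both of which come ``after'' nobody relevant): the first order should drive the process toward $M_1$ and the second toward $M_2$, because whichever of $p_1,p_3$ has its cutoff lowered first captures the shared supervisor budget of $s_1^{}$. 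This simultaneously proves (3) — reordering projects changes an earlier project's allocation — and, combined with the fact that \emph{both} $M_1$ and $M_2$ are strongly stable while the algorithm's output is forced to be one of them, it does \emph{not} immediately give (2). For (2) I need an instance where a strongly stable matching exists but the algorithm's unique output (for every project order, or at least for the one we run) is only weakly/cutoff stable and not strongly stable. A clean way is to take an instance like the $M_3$/$M_4$ example already in the paper (the three-applicant instance distinguishing the three stability notions): strongly stable matchings $M_1,M_2$ exist there, but I would check that for a suitable project order the cutoff-decreasing algorithm lands on the cutoff-stable-but-not-strongly-stable $M_3$; alternatively, embed a ``decoy'' project that the algorithm lowers first, forcing a cutoff-stable outcome that is not strongly stable.

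For part (4), I want a cutoff stable matching that no project order can make Algorithm~\ref{algo:KK} produce. The natural obstruction is that the algorithm's output is always \emph{minimal} in a strong sense: it greedily lowers one cutoff at a time and never ``overshoots'', so along the run the matching is monotone (applicants only improve). Hence any matching the algorithm can output must be reachable from the empty matching by a monotone sequence of single-applicant improvements each preserving feasibility-of-the-swapped-matching. I would exhibit an instance with a cutoff stable matching $M$ that is ``stranded'': feasible and fair with minimal cutoffs, but such that the only way to build it up by legal single steps would require passing through an infeasible intermediate configuration, so no project order reaches it. Concretely I expect a gadget where two projects share a supervisor and the target matching assigns a lower-ranked applicant to each, whereas any monotone path is forced first through the higher-ranked applicants (who then cannot be displaced because the displacing move $(M \cup \{(a',p)\}) \setminus \{(a',M(a'))\}$ is infeasible). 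Verifying that $M$ is genuinely cutoff stable (minimal cutoffs, fair) and that \emph{every} project order fails is the bookkeeping part.

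For part (1), non-strategyproofness for applicants, I would give a two-instance argument: a fixed instance and a single applicant $a$ who, by truncating or reordering her preference list, changes which project's cutoff gets lowered first and thereby obtains a strictly better project. The cleanest template is again the four-cycle: by Example~\ref{ex:multiple}, $a_1$ gets $p_1$ in $M_1$ but is unmatched in $M_2$; so I would fix a project order $P^*$ for which truthful reporting yields $M_2$ (leaving $a_1$ unmatched) but a strategic report by $a_1$ — e.g. declaring $p_1$ unacceptable or some similar manipulation that perturbs the cutoff-decreasing dynamics — yields an outcome in which $a_1$ is matched, hence better for $a_1$. \textbf{The main obstacle} I anticipate is part (4): constructing an explicit instance and then rigorously arguing that \emph{no} project order $P^*$ produces the target cutoff stable matching requires understanding the full reachable set of the algorithm, not just one run, so I would need either a clean structural invariant (e.g. a monotonicity or ``no-overshoot'' lemma characterising reachable matchings) or a careful finite case check over all orderings of a small project set. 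Parts (1)--(3) are comparatively routine once the right small instance is in hand.
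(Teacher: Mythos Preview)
Your plan is workable in spirit but misses the simplicity of the paper's constructions, and in one place (part~4) you head off in the wrong direction entirely.

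For parts (1) and (2), the paper uses a single tiny instance with two applicants, two projects, one supervisor of budget~1, and project order $P^*=(p_1,p_2)$. With truthful reports the algorithm outputs $\{(a_1,p_1)\}$, which is not the unique strongly stable matching $\{(a_2,p_2)\}$ (giving~(2)); and $a_2$ can manipulate by appending $p_1$ to her list so that the algorithm instead outputs $\{(a_2,p_2)\}$ (giving~(1)). Your proposal to route both parts through the four-cycle gadget of Example~\ref{ex:multiple} is much heavier, and for~(2) you yourself note it does not work directly (both $M_1$ and $M_2$ there are strongly stable); the fallback you sketch (``check that for a suitable project order the algorithm lands on $M_3$'') is not actually carried out.

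For part (3), note the precise statement: changing the order of projects \emph{after} $p$ can change $p$'s allocation. Your proposed comparison $(p_1,p_2,p_3,p_4)$ versus $(p_3,p_2,p_1,p_4)$ moves $p_1$ itself, so it does not isolate the phenomenon. The paper instead gives a three-project instance and compares $P^*=(p_1,p_2,p_3)$ with $P^*=(p_1,p_3,p_2)$, keeping $p_1$ first in both orders, and shows that $p_1$'s allocation nonetheless changes.

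The most significant gap is part (4). You propose building a ``stranded'' cutoff stable matching that can only be reached via infeasible intermediate configurations, and you flag this as the hard part requiring either a structural invariant or an exhaustive case analysis. This is unnecessary. The paper's proof is one line: take a classical $2\times 2$ stable-marriage instance with two stable matchings (applicant-optimal and project-optimal) and a supervisor budget large enough that feasibility never binds ($q_s=2$). Because Algorithm~\ref{algo:KK} is project-proposing, \emph{every} project order yields the project-optimal matching; the applicant-optimal matching is also cutoff stable but unreachable. No infeasibility obstruction is needed at all --- the phenomenon is already present in the unconstrained Gale--Shapley setting.
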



\begin{proof}
	We prove each of the statements separately.
	\begin{enumerate}
		\item 
	Consider the following instance.
		%
		%
		%
		%
		%
		%
		%
		%
		%
		%
			\begin{align*}
				A=\{a_1,a_2\}&&\succ_{a_1}:p_2,p_1&&\succ_{a_2}:p_2\\
				P=\{p_1,p_2\}&&\succ_{p_1}:a_2,a_1&&\succ_{p_2}:a_2,a_1\\
				S=\{s\}&&P_{s}=\{p_1,p_2\}&&
				q_{s}=c_{p_1}=c_{p_2}=1
				\end{align*}

			If we set $P^*=(p_1,p_2)$ then the algorithm outputs $M=\{(a_1,p_1)\}$. However, if $a_2$ were to modify their preferences to $\succ_{a_2}:p_2,p_1$, then the algorithm will output $M=\{(a_2,p_2)\}$, which is preferred by $a_2$. Thus, {the algorithm is not strategy-proof for applicants}.

		\item 	For the example above, since $(a_2,p_2)$ is the only strongly stable matching, the algorithm {does not find the strongly stable matching when one exists.}
		\item 	\textcolor{black}{Consider Example \ref{ex:multiple}. We can obtain each of the two cutoff stable matchings by changing the order the two projects are processed.}

		\item This is a consequence of the project-proposing nature of the algorithm. Consider the following instance.
	\begin{align*}
		A=\{a_1,a_2\}&&
		\succ_{a_1}:p_1,p_2&&
		\succ_{a_2}:p_2,p_1\\
		P=\{p_1,p_2\}&&
		\succ_{p_1}:a_2,a_1&&
		\succ_{p_2}:a_1,a_2\\
		S=\{s\}&&
		P_s=P\\
		c_{p_1}=c_{p_2}=1&&q_s=2
	\end{align*}
	Both possible project orders produce $M=\{(a_2,p_1),(a_1,p_2)\}$, but the matching $M=\{(a_1,p_1),(a_2,p_2)\}$ is also cutoff stable. Thus, {there exist weakly stable matchings that cannot be produced by the algorithm.}
	\end{enumerate}
	This completes the proof.
\qed\end{proof}

Since the examples above contain exactly one supervisor, Theorem~\ref{th:notgood} also applies to REG and any setting that contains REG as a special case.

\vspace{-0.5em}
\subsection{Finding a maximum size cutoff stable matching}

First, we observe that this problem is NP-hard for the REG model by part i) of the proof of Theorem \ref{NP-completeness}. Then we provide a mixed integer linear programming formulation.

\begin{theorem}
  Finding a maximum size cutoff\textcolor{black}{/weakly} stable matching is NP-hard for REG, \textcolor{black}{even if every hospital has one seat and each region has at most two hospitals}.
\end{theorem}

\begin{proof}
\textcolor{black}{
Part i) of the proof of Theorem \ref{NP-completeness} established a one-to-one correspondence in between the weakly stable matchings of an SMTI instance $I$ and the strongly/cutoff/weakly stable matchings of a SIP instance $I'$, where the sizes of the corresponding matchings are the same. Since finding a maximum size weakly stable matching for SMTI is NP-hard, so is the problem of finding a maximum size cutoff/weakly stable matching for SIP. Note that in this construction each project has capacity one and every supervisor has at most two distinct projects, which corresponds to the case of REG, where each hospital has capacity one and every region has at most two hospitals.}
\color{black}
\qed\end{proof}

Furthermore, since MAX-SMTI is not approximable within a factor of 21/19 unless $P=NP$ \citep{halldorsson2007improved}, due to the one-to-one relation of the sizes of the matchings in our reduction, the same inapproximability result applies for our problems as well.
\vspace{-0.5em}
\subsection*{MILP-formulation}

The mixed integer linear program will have three main parts. The first set of constraints describes the feasibility of the funding allocation. The matching in between applicants and projects are described with (0-1) binary variables defined for every mutually acceptable applicant-project pair, as follows, let $y_{a, p}=1\iff (a, p)\in M$. Let $x_{s,p}$ be a non-negative continuous variable denoting the funding for project $p$ provided by supervisor $s$. The following three sets of conditions ensure the feasibility of the solution.
\begin{equation}
\label{eq:applicant_feasible}
\text{Applicant-feasibility:  }\sum_{p\in P}y_{a, p}\leq 1 \mbox{ for each } a\in A
\end{equation}

\begin{equation}
\label{eq:project_feasible}
\text{
Project-feasibility:  }\sum_{a\in A}y_{a, p}=\sum_{s\in S}x_{s,p}\leq c_p \mbox{ for each } p\in P
\end{equation}

\begin{equation}
\label{eq:supervisor_feasible}
\text{\textcolor{black}{Budget}-feasibility: }\sum_{p\in P}x_{s,p}\leq q_s \mbox{ for each } s\in S
\end{equation}

The second set of constraints will describe the fairness of the matching by means of cutoff scores. Let $z_{a, p}$ be the score of applicant $a$ at project $p$, a given integer constant in the interval $[0,|A|]$. For every project $p$, let $d(p)$ be an integer variable in the range of $[0,|A|+1]$ denoting the cutoff of project $p$. We can link the cutoff scores with the induced matching $M$ by the following set of constraints, as also used in \citet{ABMcB2016} and \citet{Delormeetal2019}.

\begin{equation}
\label{eq:cutoff1}
d(p)\leq (1-y_{a, p})(|A|+1) + z_{a, p} \mbox{ for each } (a, p)
\end{equation}

The above constraint enforces that if an applicant is assigned to a project then she reached the cutoff there.

\begin{equation}
\label{eq:cutoff2}
z_{a, p}+1\leq d(p) + \left(\sum_{p'\succeq_a p} y_{a, p'}\right)\cdot (|A|+1) \mbox{ for each } (a, p)
\end{equation}

This constraint implies that if the applicant is rejected from project $p$, so she is not admitted there and nor to any better project of her preference (in which case the sum term is zero on the right hand side), then the cutoff must be higher at $p$ than her score there. These two sets of conditions together imply that every student is admitted to her most preferred project where she achieved the cutoff, which means that we get the matching induced by the cutoffs.

For ensuring cutoff stability we have to provide cutoff non-wastefulness by enforcing that the cutoffs are minimal. This can be achieved by simply minimizing the sum of the cutoff scores in the objective function of the MILP. To see this we just need to observe that if the matching would not be cutoff minimal, so we could decrease some of the cutoffs by keeping the solution feasible, but then the solution was not optimal with respect to this objective function.

Finally, if we want to maximize the number of applicants matched in the cutoff stable solution then we can use the following combined objective function, where $W$ is a large enough constant, \textcolor{black}{namely, $W > (|A|+1)\cdot |P|$}.

\begin{equation}
\label{eq:objective}
\max \sum_{a\in A, p\in P} W\cdot y_{a, p} - \sum_{p\in P}d(p)
\end{equation}

This completes the MILP for finding a maximum size cutoff stable matching.

\section{Egalitarian Budget Allocations}

Given a feasible matching $M$, there can exist multiple ways to allocate supervisor budgets among projects to fund all applicants matched to them. Our goal is to find a method for the fairest such allocation. We have chosen to deal with fairness post-match, in order to find a solution that does not constrain the set of feasible matchings.

\begin{algorithm}[h!]
	\color{black}
	\caption{Computing an egalitarian funding allocation for a given feasible matching}
	\label{algo:Fairness}
	\begin{algorithmic}

		\REQUIRE Feasible matching $M=\{(s,p)|s\in S, p\in P_s\}$; Supervisor budget quantities $q_s$ $\forall s\in S$; Targets $\{t_{s,p}|(s,p)\in M\}$.
		\ENSURE Funding allocation $x$
	\end{algorithmic}
	\begin{algorithmic}[1]
	
		\STATE $M_{tight}\leftarrow \emptyset$
		\STATE $\{x_{s,p}|(s,p)\in M\}\leftarrow$ Feasible funding allocation for matching M (Theorem~\ref{thm:feasibilitycheck})
		\WHILE{$M_{tight} \neq M$}
		\STATE Minimise the maximal component of $\Big\{\frac{x_{s,p}}{t_{s,p}}| {(s,p)\in M} \Big\}$,  that is not yet tight, allowing $x_{s,p}$ to vary but holding all $t_{s,p}$ fixed, by solving the following LP:
			\begin{align*}
			\lambda^* \quad \leftarrow \text{Min } \lambda \text{ s.t. }\\
			 x_{s,p}\geq 0
			&\quad \forall (s,p) \in M  &\\
			\sum_{s\in S_p} x_{s,p} = |\{(s,p)\in M|s\in S_p\}|
			&\quad \forall p\in P &\\
			\sum_{p\in P_s} x_{s,p} \leq q_{s}
			&\quad \forall s\in S &\\
			\frac{x_{s,p}}{t_{s,p}} \leq \lambda
			&\quad \forall (s,p)\in M\backslash M_{tight}   &\\
			\frac{x_{s,p}}{t_{s,p}} = \lambda_{s,p}
			&\quad \forall (s,p)\in M_{tight} &
			\end{align*}
		
		\FOR{$(s,p)^*\in M\backslash M_{tight}$}
		\STATE Determine whether the constraint corresponding to $(s,p)^*$ is tight by solving the following Auxiliary LP:
		\begin{align*}
			\epsilon^*  \leftarrow   \text{Max }  \epsilon &\quad \text { s.t. }\\
x_{s,p}\geq 0  &\quad \forall (s,p) \in M  \\
\sum_{s\in S_p} x_{s,p} = {|\{(s,p)\in M|s\in S_p\}|} &\quad \forall p\in P \\
\sum_{p\in P_s} x_{s,p} \leq q_{s} &\quad \forall s\in S\\
\frac{x_{s,p}}{t_{s,p}} \leq \lambda^* &\quad \forall (s,p)\in M\backslash M_{tight}\backslash \{(s,p)^*\}\\
\frac{x_{s,p}}{t_{s,p}} = \lambda_{s,p} &\quad \forall (s,p)\in M_{tight} \\
\frac{x_{s,p}}{t_{s,p}} + \epsilon \leq \lambda^* &\quad \text{ for } (s,p)=(s,p)^*
			\end{align*}
		\IF{$\epsilon^*=0$}
		\STATE Add $(s,p)^*$ to $M_{tight}$
		\STATE $\lambda_{s,p}\leftarrow \lambda^*$
		\ENDIF
		\ENDFOR
		\ENDWHILE
		\STATE $\{x^*_{s,p}|(s,p)\in M\}\leftarrow$ Funding allocation that solves the LP in step 4 with $\{\lambda_{s,p}|(s,p)\in M_{tight}=M\}$
		\RETURN $\{x^*_{s,p}|(s,p)\in M\}$
	\end{algorithmic}
\end{algorithm}

For each supervisor $s\in S$, and each project they supervise $p\in P_s$ set $0< t_{s,p}\leq 1$ such that, for each $p\in P$, $\sum_{s\in S_p} t_{s,p}=1$. Call this the normative `target' - how much we would want $s$ to contribute to the funding of any applicant matched to $p$.
For instance, if we would ideally want the supervisors of any given project to contribute equally to its funding, we would set:
$t_{s,p}=\frac{|M(p)|}{|S_p|} \ \ \forall p\in P, s\in S_p.$
However, given a feasible matching $M$, some supervisors may lack sufficient funding to reach these targets. Therefore, we seek to find a funding allocation that is closest to the target allocations. In order to define closest, we consider specific lexicographic comparisons.
Let $X=\{x_{s,p}\}_{s\in S,p\in P_s}$ be a feasible funding allocation for matching $M$. Denote by $\phi_X$ the vector corresponding to the weakly decreasing ordering of the set:
$\Big\{\frac{x_{s,p}}{t_{s,p}}\Big\}_{s\in S,p\in P_s}.$
Denote by $\Phi_M$ the set of such vectors corresponding to all feasible funding allocations for matching $M$.

The \textbf{egalitarian feasible funding allocation} for matching M is the funding allocation corresponding to $\phi^* \in \Phi_M$ such that for all $\phi \in \Phi_M$, $\phi^*\prec_{lex} \phi$, where $\prec_{lex}$ refers to the well-known lexicographic order. This is exactly equivalent to finding the leximin optimum of the following set: $\Big\{\frac{-x_{s,p}}{t_{s,p}}\Big\}_{s\in S,p\in P_s}.$
%
The egalitarian feasible funding allocation can be achieved via Algorithm~\ref{algo:Fairness}, which runs a series of linear programs.

\begin{theorem}\label{fairness:runningtime}
	Given a feasible matching, the egalitarian funding allocation can be computed in time $O(|S|^2|P|^2)LP(O(|S|\cdot|P|))$, where LP refers to the running time of the linear programming algorithm used.
\end{theorem}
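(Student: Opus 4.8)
The plan is to establish two things about Algorithm~\ref{algo:Fairness}: that it terminates with the egalitarian (leximin) funding allocation, and that it does so within the claimed number of linear-program solves, each on an instance of the stated size. Correctness is the by-now-standard leximin-via-iterated-LP argument: the main LP of a round computes $\lambda^*$, the largest value of a ratio $x_{s,p}/t_{s,p}$ over the not-yet-frozen pairs that must appear in \emph{every} allocation feasible for the residual polytope; the auxiliary LPs then identify which of those pairs are forced to equal $\lambda^*$ in all optimal solutions of the main LP, these get frozen into $T_{tight}$ at value $\lambda^*$, and the process repeats. A routine induction on $|T_{tight}|$ shows that the frozen values, read off in the order they are fixed, form exactly the sorted vector $\phi^*$ that is $\prec_{lex}$-minimal over $\Phi_M$; I would only sketch this, following the familiar iterated-LP leximin scheme.

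The step I expect to be the main obstacle is showing that each pass of the \textbf{while} loop freezes at least one new coordinate, i.e.\ that the \textbf{for} loop always finds some $(s,p)^*$ with $\epsilon^*=0$; without this the iteration count is not controlled. Here I would argue by a convexity/averaging argument on the optimal face of the main LP. Suppose, for contradiction, that every pair $(s,p)\in T\setminus T_{tight}$ admitted some optimal solution of the main LP in which its ratio $x_{s,p}/t_{s,p}$ is strictly below $\lambda^*$. Take the uniform average of these finitely many optimal solutions; since the ratios are affine in $x$ and the optimal face is convex, the average is again an optimal solution, and in it every not-yet-tight ratio is strictly below $\lambda^*$ (each such ratio is the mean of values all at most $\lambda^*$, one of which is strictly less). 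But then $\lambda$ could be lowered slightly while keeping feasibility, contradicting minimality of $\lambda^*$. Hence at least one pair $(s,p)^*$ has ratio exactly $\lambda^*$ in \emph{every} optimal solution of the main LP, which is precisely the condition the auxiliary LP certifies via $\epsilon^*=0$, so that pair is added to $T_{tight}$; consequently $|T_{tight}|$ strictly increases every round.

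It then remains to count. Since $T_{tight}$ grows by at least one each round and $T_{tight}\subseteq T$ with $|T|\le |S|\cdot|P|$, the \textbf{while} loop executes $O(|S||P|)$ times. Each round solves one main LP and at most $|T\setminus T_{tight}|\le |T|=O(|S||P|)$ auxiliary LPs, hence $O(|S||P|)$ LP solves per round and $O(|S|^2|P|^2)$ solves in total. Finally, every LP arising in the algorithm is over the variables $\{x_{s,p}\}_{(s,p)\in T}$ together with the single scalar $\lambda$ (or $\epsilon$), i.e.\ $O(|S||P|)$ variables, with constraints consisting of the $|T|$ nonnegativity bounds, the $|P|$ project-feasibility equalities, the $|S|$ supervisor-budget inequalities, and the $|T|$ ratio bounds and equalities, i.e.\ $O(|S||P|)$ constraints; so each solve costs $LP(O(|S|\cdot|P|))$. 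Multiplying the number of solves by the per-solve cost gives $O(|S|^2|P|^2)\,LP(O(|S|\cdot|P|))$, as claimed.
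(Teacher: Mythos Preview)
Your proposal is correct and follows the same counting argument as the paper: bound the \textbf{while} loop by $|T|\le |S||P|$ rounds, bound the inner \textbf{for} loop by $|T|$ auxiliary LPs per round, and note that each LP has $O(|S||P|)$ variables and constraints. The paper's own proof is considerably terser --- it simply asserts that ``the algorithm adds at least one element to the set $T_{tight}$ in every iteration'' without justification and does not address correctness of the output at all --- so your convexity/averaging argument for the progress guarantee and your leximin-correctness sketch actually fill gaps the paper leaves open rather than deviating from its approach.
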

\begin{proof}
	The algorithm solves a series of linear programs with $O(|S|\cdot|P|)$ constraints. The while loop iterates at most $|M|\leq|S||P|$ times, as the algorithm adds at least one element to the set $M_{tight}$ in every iteration. The for loop also iterates at most $|M|$ times. Thus, the overall complexity is $O(|S|^2|P|^2)LP(O(|S|\cdot|P|))$, where LP refers to the running time of the linear programming algorithm used. \textcolor{black}{Finding a feasible funding allocation for matching M at the beginning of the algorithm does not increase its complexity}. Since linear programs can be solved in polynomial time, the egalitarian funding allocation can also be computed in polynomial time. \textcolor{black}{It is possible that this could be improved to strongly polynomial time by reducing to the fair integral flow problem \citep{FrMu2021}. We leave the existence of such a reduction as an open question.}
\qed\end{proof}

\vspace{-1em}
\section{Conclusion}

We presented a novel matching model that captures many real-world scenarios.
For the model, we presented a compelling solution that is polynomial-time and satisfies stability and fairness properties. Our central algorithm computes a cutoff stable matching even for the general matching model with heredity constraints. Hence, it applies to many applications such as refugee matching that involve heredity constraints.

Several directions and problems arise as a result of our study.
Our approach to finding a fair budget allocation was to first compute a cutoff stable matching and then find an egalitarian budget allocation. It will be interesting to explore a fair outcome that is fairest in some global sense across all weakly stable matchings.
It had been open whether there exists an algorithm that is strategyproof and satisfies weak stability or cutoff stability. Just recently, \citet{CKM+22a} proved that strategyproofness and weak stability are incompatible in general. Understanding the conditions under which strategyproofness and stability concepts are compatible remains an interesting direction.

\vspace{-1em}
\section*{Acknowledgments}

Aziz gratefully acknowledges support from Defence Science and Technology (DST). Baychkov's work was supported by the CSIRO undergraduate vacation scholarship program, and he extends many thanks to Haris Aziz and Gavin Walker for their invaluable mentorship. Bir\'o is supported by the Hungarian Academy of Sciences, Momentum Grant No. LP2021-2, and the Hungarian Scientific Research Fund, OTKA, Grant No.\ K128611.

\small
\vspace{-1em}
\bibliographystyle{plainnat}

\end{document}